\renewcommand\section{\@startsection
  {section}{1}{\z@}%
  {-10pt plus -4pt minus -2pt}
  {10pt plus 2pt minus 2pt}
  {\normalfont\large\bfseries}%
}
\renewcommand\subsection{\@startsection
  {subsection}{2}{\z@}%
  {-8pt plus -3pt minus -2pt}%
  {6pt plus 2pt minus 2pt}%
  {\normalfont\normalsize\bfseries}%
}
\definecolor{lightblue}{rgb}{0.68, 0.85, 0.9}
\newtcbox{\mybox}{on line,
  boxrule=0pt,colframe=lightblue!75!white,colback=lightblue!75!white,
  boxsep=0pt,left=2pt,right=2pt,top=2pt,bottom=2pt,arc=6pt}
\let\emptyset\varnothing
\lstdefinelanguage{program}{
  keywords={
    action,procedure,proc,invariant,inv,function,
    returns,global,refines,
    datatype,
    const,var,
    assume,assert,requires,ensures,preserves,
    if,then,else,call,exec,match, seq, reduce, par, right, left, atomic, axiom
  },
  morecomment=[l]{//},
  morecomment=[s]{/*}{*/},
  morecomment=[n]{(**}{**)},
  mathescape=true,
  escapeinside={(*@}{@*)},
  literate={<|}{$\langle$}1 {|>}{$\rangle$}1,
}
\definecolor{commentgreen}{RGB}{2,112,10}
\lstdefinestyle{lnumbers}{
  basicstyle = \ttfamily\scriptsize,
  keywordstyle     = \bfseries,
  commentstyle     = \color{commentgreen},
  emph             = {linear,linear_in,linear_out},
  emphstyle        = \textit,
  columns          = fullflexible,
  keepspaces       = true,
  numbers          = left,
  numbersep        = 3pt,
  numberstyle      = \ttfamily\tiny\color{gray},
  numberblanklines = false,
  captionpos       = b,
    numbers=none,
  linewidth=\linewidth,
  columns=fullflexible,
  breaklines=true,
  breakatwhitespace=true,
  aboveskip=2pt, belowskip=2pt,
  xleftmargin=2pt,
  xrightmargin=2pt
}
\mathchardef\mhyphen="2D
\newcommand{\Val}{\mathit{Val}}
\newcommand{\Var}{\mathit{Var}}
\newcommand{\GlobalVar}{\mathit{GVar}}
\newcommand{\LocalVar}{\mathit{LVar}}
\newcommand{\Store}{\mathit{Store}}
\newcommand{\GlobalStore}{\mathit{GStore}}
\newcommand{\LocalStore}{\mathit{LStore}}
\newcommand{\Gate}{\mathit{Gate}}
\newcommand{\Trans}{\mathit{Trans}}
\newcommand{\Stmt}{\mathit{Stmt}}
\newcommand{\Action}{\mathit{Action}}
\newcommand{\ProcSignature}{\mathit{ProcSig}}
\newcommand{\ActionName}{\mathit{ActionName}}
\newcommand{\ProcName}{\mathit{ProcName}}
\newcommand{\IOMap}{\mathit{IOMap}}
\newcommand{\Prog}{\mathit{Prog}}
\newcommand{\MoverType}{\mathit{MoverType}}
\newcommand{\gate}{\rho}
\newcommand{\trans}{\tau}
\newcommand{\var}{v}
\newcommand{\lvar}{x}
\newcommand{\OutVar}{O}
\newcommand{\InVar}{I}
\newcommand{\MoverVar}{M}
\newcommand{\mayFailVar}{F}
\newcommand{\store}{\sigma}
\newcommand{\globalStore}{g}
\newcommand{\localStore}{\ell}
\newcommand{\stmt}{s}
\newcommand{\action}{A}
\newcommand{\proc}{Q}
\newcommand{\procOrAction}{X}
\newcommand{\callee}{Q}
\newcommand{\imap}{\iota}
\newcommand{\omap}{o}
\newcommand{\prog}{\mathcal{P}}
\newcommand{\sourceprog}{\mathcal{P}_s}
\newcommand{\interprog}{\mathcal{P}_i}
\newcommand{\reducedprog}{\mathcal{P}_r}
\newcommand{\commutes}{\mathit{commutes}}
\newcommand{\preservesSuccess}{\mathit{preserves \mhyphen success}}
\newcommand{\preservesFailure}{\mathit{preserves \mhyphen failure}}
\newcommand{\uncommitted}{\mathsf{uncommitted}}
\newcommand{\committed}{\mathsf{committed}}
\newcommand{\commit}[1]{\mathsf{block}(#1).\mathsf{commit}} 
\newcommand{\block}{\mathsf{block}}
\newcommand{\ps}{\mathit{ps}}
\newcommand{\as}{\mathit{as}}
\newcommand{\stmtfont}[1]{\mathtt{#1}}
\newcommand{\skipstmt}{\stmtfont{skip}}
\newcommand{\ifstmt}[3]{\stmtfont{if} \ #1 \ #2 \ #3}
\newcommand{\callstmt}[2][]{\stmtfont{call}_{#1}\: #2}
\newcommand{\callstmtempty}{\stmtfont{call}}
\newcommand{\seqstmt}[2]{#1\, \stmtfont{;}\, #2}
\newcommand{\atomicstmt}[1]{\stmtfont{atomic}\: #1}
\newcommand{\inatomicstmt}[1]{\stmtfont{in \mhyphen atomic}\: #1}
\newcommand{\inatomicstmtempty}{\stmtfont{in \mhyphen atomic}}
\newcommand{\parstmt}[2]{#1\ \stmtfont{par}\ #2}
\newcommand{\parstmtempty}{\stmtfont{par}}
\newcommand{\parreduce}[2]{\stmtfont{par \mhyphen reduce}\: #1\ \stmtfont{par}\ #2}
\newcommand{\parreduceempty}{\stmtfont{par \mhyphen reduce}}
\newcommand{\seqreduce}[1]{\stmtfont{seq \mhyphen reduce}\: #1}
\newcommand{\seqreduceempty}{\stmtfont{seq \mhyphen reduce}}
\newcommand{\inseqreduce}[1]{\stmtfont{in \mhyphen seq \mhyphen reduce}\: #1}
\newcommand{\inseqreduceempty}{\stmtfont{in \mhyphen seq \mhyphen reduce}}
\newcommand{\refines}{\preccurlyeq}
\newcommand{\movertype}{\mathit{mover \mhyphen type}}
\newcommand{\leftmover}{\mathbf{L}}
\newcommand{\rightmover}{\mathbf{R}}
\newcommand{\nonmover}{\mathbf{N}}
\newcommand{\bothmover}{\mathbf{B}}
\newcommand{\welltyped}{\mathit{well \mhyphen typed}}
\newcommand{\mayfail}{\mathit{may \mhyphen fail}}
\newcommand{\localvarset}{\mathit{l}}
\newcommand{\step}{\rightarrow}
\newcommand{\stepp}[1][]{\xrightarrow{#1}}
\newcommand{\transitive}{\mathrel{\vphantom{\to}^{*}}}
\newcommand{\fail}{\lightning}
\newcommand{\stmtCtxt}{\mathit{SC}}
\newcommand{\threadCtxt}{\mathit{TC}}
\newcommand{\poolCtxt}{\mathit{PC}}
\newcommand{\leafCtxt}{\mathit{LC}}
\newcommand{\hole}{\bullet}
\newcommand{\Tree}{t}
\newcommand{\Leaf}[1]{\mathsf{Lf}\: #1}
\newcommand{\Node}[2]{\mathsf{Nd}\: #1\ #2}
\newcommand{\EmptyNode}{\mathsf{Nd}}
\newcommand{\EmptyLeaf}{\mathsf{Lf}}
\newcommand{\ThreadPool}{\mathcal{T}}
\newcommand{\conf}{c}
\newcommand{\undefbio}{\text{\Biohazard}}
\newcommand{\modified}{\mathit{mod}}
\newcommand{\inputbinding}{\mathit{in \mhyphen bind}}
\newcommand{\seq}{\mathit{seq}}
\newcommand{\labeling}{\mathit{label}}
\newcommand{\code}[1]{\texttt{\small#1}}
\newcommand{\lang}{\textsf{RedPL}\xspace}
\newcommand{\civl}{Civl\xspace}
\newcommand{\pto}{\rightharpoonup}
\newcommand*{\cc}{\kern-.2em\cdot\kern-.2em}
\newcommand{\blank}{\_}
\newcommand{\ov}[1]{\overline{#1}}
\newcommand{\set}[1]{\{#1\}}
\DeclareMathOperator{\dom}{dom}
\DeclareMathOperator{\img}{img}
\newcommand{\inverse}[1]{#1^{-1}}
\newcommand{\true}{\mathit{true}}
\newcommand{\false}{\mathit{false}}
\newcommand{\circnum}[1]{\ding{\the\numexpr 171 + #1 \relax}}
\newcommand{\Circnum}[1]{\ding{\the\numexpr 181 + #1 \relax}}
\newcommand{\condition}[1]{\normalfont\textbf{(\texttt{#1})}}
\newcommand{\wlp}{\mathit{wlp}}
\newcommand{\update}[1]{{#1}}
\newcommand{\edit}[1]{\textcolor{black}{#1}}
\begin{document}

\title{Reduction for Structured Concurrent Programs}

\author{Namratha Gangamreddypalli\inst{1} \and
Constantin Enea\inst{1}\and Shaz Qadeer\inst{2}}%
\authorrunning{Namratha Gangamreddypalli \and
Constantin Enea \and Shaz Qadeer}

\institute{LIX, Ecole Polytechnique, CNRS and Institut Polytechnique de Paris, France \\ \email{\{namratha, cenea\}@lix.polytechnique.fr}
\and Microsoft \email{shaz.qadeer@gmail.com}}%

\maketitle
\setcounter{footnote}{0}
\renewcommand{\thefootnote}{\arabic{footnote}}

\begin{abstract}
Commutativity reasoning based on Lipton's movers is a powerful technique for verification of concurrent programs.
The idea is to define a program transformation that preserves a subset of the initial set of interleavings,
which is sound modulo reorderings of commutative actions.
Scaling commutativity reasoning to routinely-used features in software systems,
such as procedures and parallel composition, remains a significant challenge.

In this work, we introduce a novel reduction technique for structured concurrent programs that unifies two key advances.
First, we present a reduction strategy that soundly replaces \edit{parallel composition} with sequential composition.
Second, we generalize Lipton’s reduction to support atomic sections containing (potentially recursive) procedure calls.
Crucially, these two foundational strategies can be composed arbitrarily, greatly expanding the scope and flexibility of reduction-based reasoning.
We implemented this technique in Civl and demonstrated its effectiveness on a number of challenging case studies,
including a snapshot object, a fault-tolerant and linearizable register, the FLASH cache coherence protocol, and a non-trivial variant of Two-Phase Commit.
\end{abstract}

\section{Introduction}

Commutativity reasoning is a powerful technique for verification of concurrent programs.
This method derives from the observation that certain pairs of concurrently-executing statements can be reordered without affecting program behavior,
i.e., such statements commute.
Interleavings (i.e., concurrent execution sequences) that differ only in the ordering of commuting statements are considered equivalent.
As a result, it is sufficient to verify the correctness of a single representative interleaving from each equivalence class.
These techniques are also called \emph{reduction} techniques because they reduce reasoning to a smaller set of representative interleavings.

In static verification of concurrent programs,
a standard method to exploit commutativity reasoning is to capture the reduced program via a syntactic
transformation of the original program.
The proof of correctness is then done on the transformed program and the reduction argument is used to carry over the results of the
verification to the original program.
Lipton~\cite{DBLP:journals/cacm/Lipton75} introduced atomic sections as a simple method to capture such a transformation.
Since then, atomic sections have been used extensively to specify non-interference in and simplify reasoning
about concurrent programs~\cite{DBLP:conf/pldi/FlanaganQ03,DBLP:journals/tse/FlanaganFQ05,DBLP:conf/popl/ElmasQT09,DBLP:conf/cav/HawblitzelPQT15,DBLP:journals/pacmpl/FlanaganF20}.

In this paper, we focus on the application of commutativity reasoning towards deductive verification of concurrent programs.
Applying commutativity reasoning to real-world programs is challenging.
Software systems routinely use procedures for code structuring and scaling software engineering.
Concurrent systems, in addition, are performance oriented and often launch multiple tasks in parallel, collecting results as the tasks complete.
These features, procedures and dynamic concurrency, are not adequately addressed by existing approaches.
Lipton~\cite{DBLP:journals/cacm/Lipton75} only addresses the problem of concurrent programs with bounded threads 
and atomic sections of bounded size.
QED~\cite{DBLP:conf/popl/ElmasQT09} and Civl~\cite{DBLP:conf/cav/HawblitzelPQT15}
handle atomic sections containing loops, but do not handle procedure calls (unless they are inlined beforehand) or parallel composition.
Kragl et al.~\cite{DBLP:conf/concur/KraglQH18} present a program transformation that synchronizes asynchronous procedure calls
by demonstrating that the called procedure can be summarized as a single atomic action that commutes to the left of any other
program action (a so-called left mover).
While asynchronous calls can be viewed as a restricted form of parallel composition,
this model is not general enough for our purposes (see \autoref{sec:related}).

We introduce a novel reduction technique for structured concurrent programs that unifies two key advances.
First, we present a reduction strategy that soundly replaces \edit{parallel composition} with sequential composition,
addressing a dimension orthogonal to atomic section introduction explored in prior work.
Second, we generalize Lipton’s reduction to support atomic sections containing (potentially recursive) procedure calls.
Crucially, these two foundational strategies can be composed arbitrarily, greatly expanding the scope and flexibility of reduction-based reasoning. 

Our reduction technique is based on a concept of \emph{movers} or
commuting statements~\cite{DBLP:journals/cacm/Lipton75,DBLP:conf/popl/ElmasQT09,DBLP:conf/cav/HawblitzelPQT15}.
The soundness of atomic section introduction relies on demonstrating that it consists of a sequence of \emph{right} movers,
followed by an arbitrary statement, and then a sequence of \emph{left} movers.
Right movers are statements that commute to the right of any other statement in the program,
while left movers commute to the left;
this terminology imagines time flowing from left to right.
\edit{Importantly, mover classification is relative to the set of actions under consideration: whether a statement is a right or left mover depends on how it interacts with the other actions in the program.}
This straightforward intuition for movers is deceptive;
precise definitions are non-trivial since they must account for statements that may fail or are non-deterministic. 

To handle structured code, we extend the notion of movers to procedures through a type system~\cite{DBLP:conf/pldi/FlanaganQ03}
that analyzes their bodies.
This enables the sound introduction of atomic sections that may include recursive procedure calls
(as before, soundness relies on ensuring a well-structured sequencing of right and left movers).
For example, if the body of a procedure $Q$ consists solely of right mover statements--including nested procedure calls,
which are recursively typed--then $Q$ is classified as a right mover procedure.
An analogous classification applies to left mover procedures, with the additional requirement that they must terminate when executed in isolation.
This termination condition ensures the preservation of failure behaviors:
a non-terminating procedure could otherwise unsoundly eliminate potential failures.
Notably, although we are reasoning about concurrency, this condition relies solely on the behavior of the procedure when
executed in isolation, a surprising and useful aspect of our framework.

The second key contribution of our technique is the ability to soundly transform a \edit{parallel}
construct, where an arbitrary number of threads are spawned and joined immediately afterward,
into a sequential composition of their respective code blocks.
This transformation again builds on the notion of movers and is achieved through an iterative process
that sequences left-mover code blocks first and right-mover code blocks last. 

The two contributions of our technique are integrated within a unified framework that supports arbitrary 
\edit{sequential} and \edit{parallel} composition of procedures.
Furthermore, the two contributions work in tandem:
transforming \edit{parallel} constructs may yield left or right mover procedures which
may further enable the introduction of atomic sections. 
Our technique is formalized in a core programming language,
where the two reduction principles are invoked via specific keywords,
and a type system guarantees their correct and sound composition.
\edit{We note that our notion of movers is more general than that used in Civl;
that is, certain statements commute under our definition but not under Civl's.
Moreover, our work provides the first formal proof of the soundness of Civl's reduction theory.
This proof played a key role in identifying the most general form of commutativity sufficient to ensure soundness (\autoref{sec:movers}).}

\update{
The soundness of this reduction framework is based on non-trivial arguments. For instance, reductions are defined at code level, which means that even a single reduction can apply an unbounded number of times in an execution (where the same procedure is called multiple times). This requires defining a non-trivial strategy for reordering steps in an execution wrt their commutativity properties. Also, the side conditions for using left or right movers are asymmetric, which may seem counterintuitive. For instance, in parallel reduction, right movers are not allowed to fail, and left mover procedures, when run without interference, are required to terminate—the latter is notable because it concerns sequential executions, even though it is applied to concurrent programs.} \edit{Furthermore, the restriction that right movers must be non-failing applies only to parallel reduction and not to sequential reduction.}

We have implemented our technique as an extension to Civl, preserving compatibility with its existing features.
To assess its effectiveness, we applied our implementation to a series of challenging case studies:
a parallel implementation of a snapshot object~\cite{DBLP:journals/jacm/AfekADGMS93}, 
the ABD register~\cite{DBLP:journals/jacm/AttiyaBD95} which simulates shared memory over message passing, 
the FLASH cache coherence protocol~\cite{DBLP:conf/isca/KuskinOHHSGCNBHGRH94},
and a non-trivial variant of the Two-Phase Commit protocol. \update{These examples span diverse domains, including concurrent objects, distributed protocols, and hardware cache coherence, demonstrating the broad applicability of our approach.}
\edit{In particular, the first two case studies are concurrent objects for which we prove that they are linearizable~\cite{DBLP:journals/toplas/HerlihyW90}. Proving linearizability for these objects is known to be challenging, because it can not be done via so-called fixed linearization points, i.e., the effect of a method invocation cannot be mapped to the execution of a fixed statement in the body of the method, and it requires prophecy variables\cite{DBLP:journals/tcs/AbadiL91}.}

Reduction was indispensable for our case studies,
each of which involves fine-grained access to shared state by an unbounded number
and dynamically-created concurrent tasks.
\update{The previous version of Civl could not handle these case studies because reduction was applicable only to sequential code fragments consisting solely of actions (and no procedure calls) and had no notion of parallel reduction.}
We are not aware of any other proof technique based on reduction that can handle our case studies.
\edit{Without reduction, proofs based purely on inductive invariants would be substantially more complex: the required invariants would be large and difficult to formulate.}


\section{Overview}\label{sec:overview}
\vspace{-2mm}
\begin{figure}[t]
\centering

\begin{minipage}[t]{0.58\textwidth}
\begin{lstlisting}
procedure scan() returns (snapshot: [int]StampVal) {
  var r1: [int]StampVal;
  var r2: [int]StampVal;
  while (true) {
    (call r1[1] := read(1)) par
    (call r1[2] := read(2)); 
    (call r2[1] := read(1)) par
    (call r2[2] := read(2));
    if (r1 == r2) {
      snapshot := r1;
      return;
    }   
  }
}
\end{lstlisting}
\end{minipage}\hfill
\begin{minipage}[t]{0.40\textwidth}
\begin{lstlisting}
datatype StampVal {
  StampVal(ts: int, value: Value)
}

var mem: [int]StampVal;

action read (i: int) returns (v: StampVal) {
  v := mem[i];
}

action write(i: int, v: Value) {
  mem[i] := StampVal(mem[i]->ts + 1, v);
}

action scan_spec() returns (snapshot: [int]StampVal) {
  assume (snapshot := mem);
}
\end{lstlisting}
\end{minipage}

\vspace{-2mm}
\caption{A snapshot object. The scan procedure carries out two consecutive collects, meaning it reads the entire memory in parallel twice. If both collects yield identical results, the procedure returns. Otherwise, it restarts.}
\label{fig:snapshot}
\vspace{-2mm}
\end{figure}

\begin{figure}[t]
\centering

\begin{minipage}[t]{0.58\textwidth}
\begin{lstlisting}
procedure scan() returns (snapshot: [int]StampVal) {
  var r1: [int]StampVal;
  var r2: [int]StampVal;
  while (true) {
    seq-reduce {
      par-reduce {
        (call r1[1] := read_f(1)) par
        (call r1[2] := read_f(2))
      }
      par-reduce {
        (call r2[1] := read_s(1)) par
        (call r2[2] := read_s(2))
      }
      if (r1 == r2) {
        snapshot := r1;
        return;
      }
    }
  }
}
\end{lstlisting}
\end{minipage}\hfill
\begin{minipage}[t]{0.40\textwidth}
\begin{lstlisting}
right action read_f(i: int) returns (out: StampVal) {
  var k: int;
  var v: Value;
  if (*) {
    assume k < mem[i]->ts; 
    out := StampVal(v, k);
  } else {
    out := mem[i];
  }
}

left action read_s(i: int) returns (out: StampVal) {
  var k: int;
  var v: Value;
  if (*) {
    assume k > mem[i]->ts; 
    out := StampVal(v, k);
  } else {
    out := mem[i];}
}
\end{lstlisting}
\end{minipage}

\vspace{-2mm}
\caption{An abstraction \code{scan}. Compared to the original, the two memory reads call the abstracted actions \code{read\_f} and \code{read\_s}, resp. In these actions, $*$ is non-deterministic choice and local variables are initially assigned arbitrary values. The annotations \code{seq-reduce} and \code{par-reduce} are related to our reduction technique.}
\label{fig:snapshot:abs}
\vspace{-6mm}
\end{figure}

We demonstrate our reduction proof technique on an implementation of a concurrent \emph{snapshot} object~\cite{DBLP:journals/jacm/AfekADGMS93} that provides two methods: \code{write(i,v)} that writes value \code{v} to memory cell \code{i}, and \code{scan()} which returns a snapshot of the entire memory. We assume that the memory is represented using an array. These methods can be called concurrently from an arbitrary number of threads. We first describe the implementation and the specification we are trying to prove, and then detail the application of our reduction proof technique.

\subsection{A Concurrent Snapshot Object}\label{ssec:overview:impl}

\noindent
\textbf{Implementation.} \autoref{fig:snapshot} lists the code of the snapshot object (\code{scan\_spec} is explained later). Each memory cell holds a timestamped value (a value along with an integer timestamp). For simplicity, we consider a memory with just two cells. The arbitrary-size case is considered in \autoref{sec:overview:unbounded}.

The code uses a programming language with regular procedure calls and parallel composition, where each access to the shared memory is encapsulated into a so-called \emph{action}. Actions are assumed to execute atomically in a single indivisible step. In this case, we have two actions, \code{read(i)} reads the \code{i}-th memory cell, and \code{write(i,v)} updates the \code{i}-th memory cell with value \code{v} and a timestamp incremented by 1 from its current timestamp.

The procedure \code{scan} consists of a ``spin'' loop that exits when two consecutive reads of the entire memory yield identical results. A read of the memory is parallelized, each memory cell is read in a different thread. This is written using the \code{par} keyword in between the two calls to the action \code{read} (actions are called in the same way as procedures). The meaning of $s\ \code{par}\ s'$ for two statements $s$ and $s'$ is that $s$ and $s'$ are executed in two different threads which are joined before executing the next statement. This ensures that the two reads of the entire memory do not overlap in time. For an expert reader, this corresponds to the fork-join model.

The procedure \code{write(i,v)} is omitted; it simply calls the homonymous action.

\smallskip
\noindent
\textbf{Specification: Linearizability.} Our goal is to show that this object is \emph{linearizable}, i.e., each concurrent invocation of \code{scan} or \code{write} seems to take effect instantaneously at some point between the call and the return. That is, each concurrent execution of multiple invocations corresponds to a \emph{linearization}--a valid sequence of those invocations where every \code{scan} returns the memory state resulting from all preceding \code{write} operations. 

The aforementioned sequential semantics of \code{scan} is defined by the action \code{scan\_spec} in \autoref{fig:snapshot}, which assumes that the return value equals some instantaneous read of the memory (recall that actions execute in a single indivisible step). 
Linearizability can be reduced to showing that \code{scan} is a refinement of \code{scan\_spec}, in a sense that will be made precise later. Note that \code{scan\_spec} may block and this is sound because linearizability does not imply any notion of progress by itself. 

\smallskip
\noindent
\textbf{Linearizability Proof.} 
The work of Attiya et al.\cite{DBLP:conf/wdag/AttiyaE19} shows that any ``unreduced'' linearizability proof requires prophecy variables, where an ``unreduced'' proof is one that attempts to establish a linearization for every possible execution of any number of invocations. 
Linearizability is equivalent to a standard notion of \emph{trace inclusion} between the concurrent object and an atomic (sequential) specification where every invocation performs a single indivisible step. Traces are sequences of call and return events storing input and return values. Trace inclusion is known to be equivalent to the existence of a composition of a forward and backward simulation relations~\cite{DBLP:journals/iandc/LynchV95}. Attiya et al.~\cite{DBLP:conf/wdag/AttiyaE19} show that there exists no forward simulation from this snapshot object to the corresponding atomic (sequential) specification, which implies the need for using backward simulations (a forward simulation corresponds to a proof using so-called fixed linearization points). 
Backward simulations are known to correspond to using prophecy variables in a deductive verification context~\cite{DBLP:journals/tcs/AbadiL91,DBLP:journals/iandc/LynchV95}.

Next, we present a proof using our reduction technique which avoids the use of notoriously challenging prophecy variables. After a step of abstraction, the reads will become movers and they can be reordered to form an atomic section which is a ``direct'' refinement of \code{scan\_spec}. 

This goes beyond previous reduction techniques since the \code{scan} implementation nests parallel composition and sequential composition of statements.

\subsection{Using Abstraction to Enable Reduction}

The actions \code{read} and \code{write} do not commute for obvious reasons. To enable reduction, we introduce two abstractions of \code{read}: a right-mover abstraction \code{read\_f} and a left-mover abstraction \code{read\_s}, which are listed on the right of \autoref{fig:snapshot:abs}. In general, an abstraction of an action over-approximates its effect on the global state and the set of possible return values.

A \code{read} abstraction commutes to the right of a write if it can return a value with an older timestamp than the one in memory, meaning any value it returns before the write remains valid afterward. We introduce this behavior via a non-deterministic choice: \code{read\_f} can either return the timestamped value in memory, or an arbitrary timestamped value provided that the timestamp is strictly smaller than the timestamp in memory. The left mover abstraction  \code{read\_s} is very similar except that the returned timestamp in the ``arbitrary'' case should be strictly higher than the timestamp in memory. We note that designing such abstractions requires creativity, as in any other deductive proof system. The advantage here is that they lead to more ergonomic proofs—both more succinct and less tedious. Moreover, soundness of mover abstractions is a local property, as it concerns only pairs of actions, and the induced reduction significantly simplifies subsequent reasoning. 

\autoref{fig:snapshot:abs} lists an abstraction of the \code{scan} procedure which calls the abstract actions \code{read\_f} and \code{read\_s} during the first and second read of the memory, resp. The occurrences of \code{seq-reduce} and \code{par-reduce} are explained below and should be ignored for now.
This is a sound abstraction in the sense that any concurrent execution of the original snapshot implementation is also possible when \code{scan} is replaced by this abstract version. Soundness is a straightforward consequence of the fact that \code{read\_f} and \code{read\_s} over-approximate the behavior of \code{read}.

In the following, we show that this abstraction of \code{scan} is a refinement of the atomic action \code{scan\_spec}, which concludes the linearizability proof.

\subsection{Reducing Parallel Statements}

We prove that it is sound to treat all reads in the abstract \code{scan} from \autoref{fig:snapshot:abs} as executing atomically, without interference from other threads. The first reduction step removes the use of parallel composition, represented by the annotation \code{par-reduce}. The first occurrence of \code{par-reduce} relies on \code{read\_f(2)} being a right mover and thus, 
\vspace{-1mm}
\begingroup
\small
\setlength{\jot}{2pt}
\begin{align*}
\code{(call r1[1] := read\_f(1)) par (call r1[2] := read\_f(2))} \\[-6mm]
\end{align*}
\endgroup
can be rewritten to 
\vspace{-2mm}
\begingroup
\small
\setlength{\jot}{2pt}
\begin{align*}
\code{call r1[1] := read\_f(1); call r1[2] := read\_f(2)} \\[-6mm]
\end{align*}
\endgroup

where the parallel composition \code{par} has been replaced by sequential composition \code{;}. \update{This fixes an order between the two actions, but interference is still allowed in between the two calls (i.e., after \code{read\_f(1)} completes but before \code{read\_f(2)} starts).}

Indeed, for any interleaving where \code{read\_f(2)} executes before \code{read\_f(1)}, the right moverness of \code{read\_f(2)} implies that it can be soundly swapped to the right of all actions that execute until \code{read\_f(1)} and \code{read\_f(1)} itself (here, soundness means preserving the final state and all return values of actions or procedures).

Dually, the second occurrence of \code{par-reduce} relies on \code{read\_s(1)} being a left mover in order to reduce the parallel composition \code{par} to sequential composition.

\begin{wrapfigure}{l}{0.32\textwidth}
\vspace{-7mm}
\begin{lstlisting}[numbers=none]
seq-reduce { // -> atomic {
  call r1[1] := read_f(1); // right
  call r1[2] := read_f(2); // right
  call r2[1] := read_s(1); // left
  call r2[2] := read_s(2); // left
  if (r1 == r2) {
    snapshot := r1;
    return;
  }
}
\end{lstlisting}
\vspace{-4mm}
\caption{A reduced loop iteration.}
\label{fig:snapshot:reduced}
\vspace{-7mm}
\end{wrapfigure}
The result of reducing the two parallel statements is shown on the left, in \autoref{fig:snapshot:reduced}. For simplicity, we write just the loop iteration. The sequence of reads is now a sequence of right movers followed by a sequence of left movers and we can use Lipton's reduction in order to rewrite it as an atomic section (the conditional and the assignments that follow the reads are accessing local variables and can be reordered in any direction, to the left in this case). Invoking this reduction principle is done via the keyword \code{seq-reduce}. 
The final reduced form of \code{scan} will group all reads and the if conditional inside an atomic section, marked using the keyword \code{atomic}. It is now quite straightforward to show that \code{scan} refines the atomic action \code{scan\_spec}: 
\vspace{-2mm}
\begin{itemize}
\item every iteration where the conditional fails has no effect on the return value,
\item if the conditional succeeds, then for every cell, the timestamps returned by \code{read\_f} and \code{read\_s} are identical. This indicates that both reads accessed the current memory state, and the values were not chosen arbitrarily. Specifically, if \code{read\_f} had returned a timestamp smaller than the one in memory, then \code{read\_s} could not have returned the same timestamp, as it only returns timestamps strictly greater than those currently in memory (which increase monotonically). The action \code{read\_s} cannot return a timestamp greater than the one in memory for similar reasons.
\end{itemize}
\vspace{-6mm}
\subsection{The Unbounded Memory Case: Reductions for Structured~Code}\label{sec:overview:unbounded}
\vspace{-2mm}
\begin{figure}[t]
\centering

\begin{minipage}{0.53\textwidth}
\begin{lstlisting}
procedure scan() returns (snapshot: [int]StampVal){
  var r1: [int]StampVal;
  var r2: [int]StampVal;
  while (true) {
    seq-reduce {
      call r1 := collect_f(n); // right
      call r2 := collect_s(n); // left
      if (r1 == r2) {
        snapshot := r1;
        return;
      }
    }
  }
}
\end{lstlisting}
\end{minipage}\hfill
\begin{minipage}{0.46\textwidth}
\begin{lstlisting}
right procedure collect_f(n: int) returns (r: [int]StampVal) {
  var out: StampVal;
  if (n == 0) { return; }
  else {
    par-reduce {
      (call r := collect_f(n-1)) par
      (call out := read_f(n))
    }
    r[n] := out;}
}

left procedure collect_s(n: int) returns (r: [int]StampVal) {
  var out: StampVal;
  if (n == 0) { return; }
  else {
    par-reduce {
      (call out := read_s(n)) par
      (call r := collect_s(n-1))
    }
    r[n] := out;}
}

\end{lstlisting}
\end{minipage}
\vspace{-2mm}
\caption{Applying reduction on an abstraction of the \code{scan} procedure for an unbounded size memory.}
\label{fig:snapshot:abs:unbounded}
\vspace{-7mm}
\end{figure}





\autoref{fig:snapshot:abs:unbounded} lists a reduction proof for an extension of the previous \code{scan} implementation to an unbounded size memory. Memory reads are performed inside two recursive procedures \code{collect\_f} and \code{collect\_s} which use the corresponding \code{read\_f} and \code{read\_s} actions to read memory cells. 

Notably, this demonstrates an extension of Lipton's reduction to structured programs, code that contains procedure calls. For compositionality, we introduce a moverness type for procedures, and use that moverness type in a similar way to Lipton's reduction. After a parallel reduction step (explained below), \code{collect\_f} and \code{collect\_s} are typed as right and left procedures, respectively. This enables a reduction step that yields an atomic section encompassing an entire iteration of the outer \code{scan} loop, as in the previous case.

The parallel reductions are now performed inside each of the two recursive procedures, and rely on similar arguments as above (the moverness of the \code{read\_f} and \code{read\_s} actions). After this reduction step, they contain no more parallel composition, and since all the actions they perform have the same moverness type, this type can be exported at the procedure level. The left moverness case requires that the procedure terminates \emph{when executed in isolation}, which is obvious here because the parameter decreases and it is bounded below by 0.

Once both reduction steps have been applied, proving that the \code{scan} procedure is a refinement of \code{scan\_spec} is similar to the bounded case presented above.

Formalizing the correctness of this reduction technique, which handles both structured code and parallel composition, is non-trivial. The next section introduces a simple yet expressive programming language used to reason about correctness in the following sections.
\vspace{-2mm}

\section{\lang: Syntax and Semantics}
\label{sec:language}
\vspace{-2mm}

\begin{figure}[t]
\centering
{\small 
$
\action \in \ActionName \quad
\proc \in \ProcName \quad
\procOrAction \in \ActionName \cup \ProcName
$

\begin{minipage}{.45\linewidth}
\centering
$
\begin{array}{rclcl}
               &   & \Val         &\ni& \undefbio\\
  \var         &\in& \Var         &=& \GlobalVar \cup \LocalVar \\
  \globalStore &\in& \GlobalStore &=& \GlobalVar \to \Val \\
  \localStore  &\in& \LocalStore  &=& \LocalVar \pto \Val \\
  \store       &\in& \Store       &=& \Var \pto \Val \\
  \gate        &\in& \Gate        &=& 2^\Store \\
  \trans       &\in& \Trans       &=& 2^{\Store \times \Store} \\
  \imap,\omap  &\in& \IOMap       &=& \LocalVar \pto \LocalVar \\
  \lvar        &\in& \LocalVar   & & \\
  \InVar,\OutVar &\in& 2^\LocalVar \\
  \MoverVar &\in& \MoverType &=& \{B, R, L, N, \top\} \\
  \mayFailVar &\in& \{\true, \false\}
\end{array}
$
\end{minipage}%
\begin{minipage}{.65\linewidth}
\centering
$
\begin{array}{rcl}
  \stmt \in \Stmt &::=& \skipstmt ~|~ \ifstmt{\lvar}{\stmt}{\stmt} \\
                  &   &|~ \seqstmt{\stmt}{\stmt} ~|~ \parstmt{\stmt}{\stmt} \\
                  &   &|~ \callstmt{(\procOrAction,\imap,\omap)} \\
                  &   &|~ \atomicstmt{\stmt} \\
                  &   &|~ \parreduce{\stmt}{\stmt} \\
                  &   &|~ \seqreduce{\stmt} \\
  \Action &::=& (\InVar, \OutVar, \gate, \trans, \MoverVar, \mayFailVar) \\
  \ProcSignature  &::=& (\InVar, \OutVar, \MoverVar, \mayFailVar) \\
  \as     &\in& \ActionName \to \Action \\
  \ps     &\in& \ProcName \to \ProcSignature \\
  \prog \in \Prog &=& \ProcName \to \Stmt
\end{array}
$
\end{minipage}
}
\vspace{-2mm}
\caption{\lang: Syntax}
\label{fig:syntax}
\vspace{-6mm}
\end{figure}


In this section we present our core programming language \lang to formalize our approach to reduction.
Our language is inspired by RefPL~\cite{DBLP:conf/cav/KraglQH20}.
\autoref{fig:syntax} summarizes the syntax of \lang.

\noindent
\textbf{Variables and stores.}
We assume there is a fixed set of \emph{global variables} $\GlobalVar$ and a fixed set of \emph{local variables} $\LocalVar$
such that $\GlobalVar$ and $\LocalVar$ are disjoint.
The set of \emph{variables} $\Var$ is the union of $\GlobalVar$ and $\LocalVar$.
A \emph{store} $\store$ is a partial map from variables to \emph{values}.
We write $\store' \subseteq \store$ if $\store$ is an extension of $\store'$,
$\store|_V$ for the restriction of $\store$ to $V$, $\store - V$ for $\store|_{dom(\store) \backslash V}$, 
$\store[\store']$ for the store that is like $\store'$ on $\dom(\store')$ and otherwise like $\store$,
and $\globalStore\cc\localStore$ for the combination of \emph{global store} $\globalStore$ and \emph{local store} $\localStore$.

\noindent
\textbf{Actions.}
\lang models uninterrupted execution by a thread using
atomic actions~\cite{DBLP:conf/popl/ElmasQT09,DBLP:conf/concur/KraglQH18}.
We assume there is a fixed set of actions with names from the set $\ActionName$.
\update{All accesses to global variables are confined to actions.}
The action map $\as$ maps each $\action \in \ActionName$ to a tuple $\as(\action) = (\InVar, \OutVar, \gate, \trans, \MoverVar, \mayFailVar)$.
The set of input variables $\InVar$ 
and the set of output variables $\OutVar$ are each a subset of $\LocalVar$.
The gate $\gate$ is a set of stores such that $\dom(\localStore) = \InVar$ for each $\globalStore\cc\localStore \in \gate$.
The transition relation $\trans$ is a relation over stores such that $\dom(\localStore) = \InVar$ and $\dom(\localStore') = \OutVar$
for each $(\globalStore\cc\localStore, \globalStore'\cc\localStore') \in \trans$.
Executing the action from a store $\store$ that does not satisfy the gate (i.e., $\store \notin \gate$) fails the execution. 
Otherwise, every transition $(\store,\store')$ in $\trans$ describes a possible atomic state transition from $\store$
(over $\GlobalVar \cup \InVar$) to $\store'$ (over $\GlobalVar \cup \OutVar$).
The mover type $\MoverVar$ of the action is a member of the set $\MoverType$~\cite{DBLP:conf/pldi/FlanaganQ03};
it captures succintly the nature of commutativity of this action compared to other actions defined by $\as$.
The failure type $\mayFailVar$ is a Boolean value that indicates whether it is possible for this action to fail.
If $\mayFailVar = \false$, then the gate must include all possible stores over $\GlobalVar \cup \InVar$.

Our formalization does not provide concrete syntax for the bodies of atomic actions,
instead choosing to model them abstractly using a symbolic transition system.
Our modeling approach is general and allows actions to be arbitrary and potentially failing computations
over global, input, and output variables.
Specifically, actions can model a variety of statements---asserts, assumes, (nondeterministic) assignments, choice, and sequencing.

\noindent
\textbf{Procedures.}
\lang models preemptible concurrent execution using procedures.
We assume there is a fixed set of procedures with names from the set $\ProcName$ which is disjoint from $\ActionName$.
We split the specification of procedures into two maps---the signature $\ps$ and the program $\prog$.
An important aspect of our formalization is to transform procedure bodies while keeping their signature fixed.
Splitting the specification of procedure behavior into the signature map and the program aids our formalization.

The signature map $\ps$ maps each $\proc \in \ProcName$ to a tuple $(\InVar, \OutVar, \MoverVar, \mayFailVar)$.
The set of input variables $\InVar$ 
and the set of output variables $\OutVar$ are each a subset of $\LocalVar$.
When $\proc$ is called, its local store gets a binding for each variable in $\LocalVar$.
The mover type $\MoverVar$ is a member of the set $\MoverType$.
The failure type $\mayFailVar$ is a Boolean indicating whether it is possible for the execution of the procedure to fail.
Our type checker, described later, checks the consistency of $\MoverVar$ and $\mayFailVar$ against the body of the procedure.

The program $\prog$ maps each $\proc \in \ProcName$ to a statement $\stmt$ that is executed when $\proc$ is called.
The primitive statement $\skipstmt$ does nothing; it serves as a marker in the formal operational semantics explained later.
The statement $\ifstmt{\lvar}{\stmt_1}{\stmt_2}$ looks up the value of $\lvar$ in the local store and continues to execute 
$\stmt_1$ if the value is $\true$ or $\stmt_2$ if the value is $\false$. 
The statement $\seqstmt{\stmt_1}{\stmt_2}$ executes $\stmt_1$ followed by $\stmt_2$.
The statement $\parstmt{\stmt_1}{\stmt_2}$ executes both $\stmt_1$ and $\stmt_2$ in parallel.

The statement $\callstmt{(\procOrAction,\imap,\omap)}$ calls either an action or a procedure.
Parameter passing is expressed using
an \emph{input map} $\imap$ from $\InVar$ to $\LocalVar$, and
an injective \emph{output map} $\omap$ from $\OutVar$ to $\LocalVar$.
For both $\imap$ and $\omap$, the domain is callee's formals and the range is caller's actuals.
Input variables are immutable, since they are not mapped to by output maps and the variables
of a procedure are not modified anywhere else.
An action call is the only way to access global variables or to modify either the global or the local store.

The statement $\atomicstmt{\stmt}$ executes $\stmt$ with preemptions disabled, i.e., the statement $\stmt$ is executed to
completion before any other concurrent execution is scheduled.  
The statement $\parreduce{\stmt_1}{\stmt_2}$ expresses the programmer intention to
reduce the parallel computation $\parstmt{\stmt_1}{\stmt_2}$ to the
sequential computation $\seqstmt{\stmt_1}{\stmt_2}$.
The statement $\seqreduce{\stmt}$ expresses the programmer intention to reduce the statement $\stmt$
to $\atomicstmt{\stmt}$.
A statement $s$ is {\em atomic-free} if $s$ does not have any occurrences of $\stmtfont{atomic}$.
A statement $s$ is {\em reduce-free} if $s$ does not have any occurrences of $\stmtfont{seq \mhyphen reduce}$
or $\stmtfont{par \mhyphen reduce}$.
A program $\prog$ is {\em atomic-free} if $\prog(Q)$ is atomic-free for all $Q \in \dom(\prog)$.

Although \lang does not have explicit support for loops and nondeterministic choice,
both can be modeled.
We can model a loop using a recursive procedure.
We can model nondeterministic choice using the conditional statement $\ifstmt{\lvar}{\stmt}{\stmt}$
after assigning a nondeterministically chosen value to the local variable $\lvar$ (via an action).

\begin{figure}
\centering
{\small

\newcommand{\rulename}[1]{\textbf{(#1)}}
\newcommand{\rulespace}{\vspace{4.3mm}}

\setlength{\abovedisplayskip}{0pt}
\setlength{\belowdisplayskip}{0pt}
\setlength{\abovedisplayshortskip}{0pt}
\setlength{\belowdisplayshortskip}{0pt}

\begin{minipage}{.5\linewidth}
\begin{align*}
  f           &::= (\localStore, \stmtCtxt[\stmt]) \\
  \Tree       &::= \Leaf{f} ~|~ \Node{f}{\ov{\Tree}}  \\
  \ThreadPool &::= \set{\Tree,\dots,\Tree} \\
  \conf       &::= (\globalStore, \ThreadPool) \mid \fail
\end{align*}
\end{minipage}%
\begin{minipage}{.5\linewidth}
\begin{align*}
  \stmtCtxt   &::= \hole_\stmt \mid \seqstmt{\stmtCtxt}{\stmt} \mid \inatomicstmt{\stmtCtxt} \\
              &\phantom{::==}\mid \inseqreduce{\stmtCtxt}\\
  \threadCtxt &::= \hole_\Tree \mid \Node{f}{\ov{\Tree} \threadCtxt \ov{\Tree}} \\
  \poolCtxt   &::= \set{\threadCtxt} \uplus \ThreadPool \\
  \leafCtxt   &::= \poolCtxt[\Leaf{(\hole_\localStore, \stmtCtxt)}]
\end{align*}
\end{minipage}

\rulespace

$
\begin{array}{ll}
  \rulename{proc call} &
  (\globalStore, \poolCtxt[\Leaf{(\localStore, \stmtCtxt[\callstmt[]{(\callee,\imap,\omap)}])}]) \step {} \\&
  (\globalStore, \poolCtxt[\Node{(\localStore, \stmtCtxt[\callstmt[]{(\callee,\imap,\omap)}])}{\Leaf{(\{\var\mapsto\undefbio \mid \var \in \LocalVar\}[\localStore\circ\imap], \prog(\callee))}}])
\end{array}
$

\rulespace

$
\begin{array}{ll}
  \rulename{return} &
  (\globalStore, \poolCtxt[\Node{(\localStore, \stmtCtxt[\callstmt[]{(\callee,\imap,\omap)}])}{\Leaf{(\callee,\localStore',\skipstmt)}}]) \step {} \\&
  (\globalStore, \poolCtxt[\Leaf{(\localStore[\localStore'\circ\inverse{\omap}], \stmtCtxt[\skipstmt])}])
\end{array}
$

\rulespace

$
\begin{array}{ll}
  \rulename{fork} &
  (\globalStore, \poolCtxt[\Leaf{(\localStore, \stmtCtxt[\parstmt{\stmt_1}{\stmt_2}])}]) \step {} \\&
  (\globalStore, \poolCtxt[\Node{(\localStore, \stmtCtxt[\parstmt{\stmt_1}{\stmt_2}])}{\Leaf{(\localStore,\stmt_1)}}\:{\Leaf{(\localStore,\stmt_2)}}])
\end{array}
$

\rulespace

$
\begin{array}{ll}
  \rulename{join} &
  (\globalStore, \poolCtxt[\Node{(\localStore, \stmtCtxt[\parstmt{\stmt_1}{\stmt_2}])}{\Leaf{(\localStore_1,\skipstmt)}}\:{\Leaf{(\localStore_2,\skipstmt)}}]) \step {} \\&
  (\globalStore, \poolCtxt[\Leaf{(\localStore[\localStore_1|_{\modified(\stmt_1)}][\localStore_2|_{\modified(\stmt_2)}], \stmtCtxt[\skipstmt])}])
\end{array}
$

\rulespace

$
\infer
{
  (\globalStore,  \leafCtxt[\localStore][\callstmt{(\action,\imap,\omap)}]) \step
  (\globalStore', \leafCtxt[\localStore'][\skipstmt])
}
{
  \begin{gathered}
  \rulename{action call}~
  as(A) = (\blank, \blank, \gate, \trans) \quad
  (\globalStore\cc(\localStore \circ \imap), \globalStore'\cc\hat{\localStore}) \in \gate\circ\trans \\
  \localStore' = \localStore[\hat{\localStore} \circ \inverse{\omap}]
  \end{gathered}
}
$

\rulespace

$
\infer
{
  (\globalStore, \leafCtxt[\localStore][\callstmt{(\action,\imap,\omap)}]) \step
  \fail
}
{
  \begin{gathered}
  \rulename{action fail}~
  \as(\action) = (\blank, \blank, \gate, \blank, \blank, \blank) \\
  \globalStore\cc(\localStore\circ\imap) \notin \gate
  \end{gathered}
}
$
\hfill
$
\infer
{
  (\globalStore, \leafCtxt[\localStore][\ifstmt{\lvar}{\stmt_1}{\stmt_2}]) \step
  (\globalStore, \leafCtxt[\localStore][\stmt'])
}
{
  \rulename{branch}\quad
  \stmt' =
  \begin{cases}
    \stmt_1 & \localStore[\lvar] = \true\\
    \stmt_2 & \localStore[\lvar] = \false
  \end{cases}
}
$

\rulespace

$
\rulename{skip}~
(\globalStore, \leafCtxt[\localStore][\seqstmt{\skipstmt}{\stmt}]) \step
(\globalStore, \leafCtxt[\localStore][\stmt])
$
\hfill
$
\rulename{stop}~
(\globalStore, \set{\Leaf{(\blank, \skipstmt)}} \uplus \ThreadPool) \step
(\globalStore, \ThreadPool)
$

\rulespace

$
\rulename{atomic enter}~
(\globalStore, \leafCtxt[\localStore][\atomicstmt{\stmt}]) \step
(\globalStore, \leafCtxt[\localStore][\inatomicstmt{\stmt}])
$

\rulespace
$
\rulename{atomic exit}~
(\globalStore, \leafCtxt[\localStore][\inatomicstmt{\skipstmt}]) \step
(\globalStore, \leafCtxt[\localStore][\skipstmt])
$

\rulespace

$
\begin{array}{ll}
  \rulename{par-reduce enter} &
  (\globalStore, \poolCtxt[\Leaf{(\localStore, \stmtCtxt[\parreduce{\stmt_1}{\stmt_2}])}]) \step {} \\&
  (\globalStore, \poolCtxt[\Node{(\localStore, \stmtCtxt[\parreduce{\stmt_1}{\stmt_2}])}{\Leaf{(\localStore,\stmt_1)}}\:{\Leaf{(\localStore,\stmt_2)}}])
\end{array}
$

\rulespace

$
\begin{array}{ll}
  \rulename{par-reduce exit} &
  (\globalStore, \poolCtxt[\Node{(\localStore, \stmtCtxt[\parreduce{\stmt_1}{\stmt_2}])}{\Leaf{(\localStore_1,\skipstmt)}}\:{\Leaf{(\localStore_2,\skipstmt)}}])\step {} \\&
  (\globalStore, \poolCtxt[\Leaf{(\localStore[\localStore_1|_{\modified(\stmt_1)}][\localStore_2|_{\modified(\stmt_2)}], \stmtCtxt[\skipstmt])}])
\end{array}
$

\rulespace

$
\rulename{seq-reduce enter}~
(\globalStore, \leafCtxt[\localStore][\seqreduce{\stmt}]) \step
(\globalStore, \leafCtxt[\localStore][\inseqreduce{\stmt}])
$

\rulespace
$
\rulename{seq-reduce exit}~
(\globalStore, \leafCtxt[\localStore][\inseqreduce{\skipstmt}]) \step
(\globalStore, \leafCtxt[\localStore][\skipstmt])
$

\caption{\lang: Operational semantics for program $\prog$}
\label{fig:semantics}

}
\end{figure}

\subsection{Semantics}

\autoref{fig:semantics} presents the operational semantics of \lang as 
a transition relation $\step$ over \emph{configurations}.
Each configuration is either a failure $\fail$ or a pair $(\globalStore, \ThreadPool)$
comprising a global store $\globalStore$ and a finite multiset $\ThreadPool$ of threads.
Each thread is a tree (which generalizes a call stack);
new leaf nodes (\textsf{Lf}) are created via the $\callstmtempty$ and $\parstmtempty$ statements. 
Both of these statements block the caller in an internal node \textsf{Nd} until the leaf nodes are finished.
Each tree node contains a \emph{frame} $(\localStore, \stmt)$, where $\localStore$ is the local store and $\stmt$ is the remaining statement to execute. 

In the definition of $\step$ we use several evaluation contexts that have a unique hole $\hole$ which marks the evaluation position;
filling the hole is denoted by $\cdot[\cdot]$. 
$\stmtCtxt$ is the statement context, which is a statement with a hole $\hole$.
$\stmtCtxt[\stmt]$ is a statement with $\stmt$ in evaluation position.
In addition to a hole $\hole$, there are three other statement contexts.
The context $\seqstmt{\stmtCtxt}{\stmt}$ finishes evaluating $\stmtCtxt$ before moving on to $\stmt$.
The context $\inatomicstmt{\stmtCtxt}$ is introduced when $\inatomicstmtempty$ is entered.
The context $\inseqreduce{\stmtCtxt}$ is introduced when $\inseqreduceempty$ is entered.
$\poolCtxt$ is a multiset of thread trees with a hole $\hole$ in one of the trees.
We have additional conditions on this multiset $\poolCtxt$:
(1)~Trees that do not contain a hole $\hole$ do not have the $\inatomicstmtempty$ statement in them. 
(2)~If $\Tree$ is the tree with the hole $\hole$, and if there is any $\inatomicstmtempty$ statement in $\Tree$,
then it must be on the unique path from root of $\Tree$ to the hole. 
These conditions ensure that an atomic statement executes without interference.
The hole in a $\poolCtxt$ may be filled with an arbitrary tree.
$\leafCtxt$ is a specialization of $\poolCtxt$ in which the hole is filled with a leaf with holes inside it
for a local store and the next statement to be executed. 

\begin{wrapfigure}{l}{0.57\textwidth}
\vspace{-12mm}
\centering
\setlength{\arraycolsep}{2pt}
{\small
\[
\begin{array}{@{}rcl@{}}
\modified(\skipstmt) & = & \emptyset \\
\modified(\callstmt(X,\imap,\omap)) & = & \img(\omap) \\
\modified(\seqstmt{\stmt_1}{\stmt_2}) & = & \modified(\stmt_1) \cup \modified(\stmt_2) \\
\modified(\parstmt{\stmt_1}{\stmt_2}) & = & \modified(\stmt_1) \cup \modified(\stmt_2) \\
\modified(\atomicstmt{\stmt}) & = & \modified(\stmt) \\
\modified(\parreduce{\stmt_1}{\stmt_2}) & = & \modified(\stmt_1) \cup \modified(\stmt_2) \\
\modified(\seqreduce{\stmt}) & = & \modified(\stmt) \\
\modified(\ifstmt{\lvar}{\stmt_1}{\stmt_2}) & = & \modified(\stmt_1) \cup \modified(\stmt_2) \\
\end{array}
\]
}
\vspace{-5mm}
\caption{The $\modified$ function}
\label{fig:modified}
\vspace{-8mm}
\end{wrapfigure}

\autoref{fig:semantics} provides the semantics for a fixed program $\prog$ organized as a collection of rules,
one for each kind of statement in the hole.
The rule $\textbf{(proc call)}$ for executing $\callstmt[]{(\callee,\imap,\omap)}$ from a context with local store $\localStore$
creates a new leaf and initializes its frame with a local store where the input variables of $\callee$
get their values from $\localStore \circ \imap$ ($\circ$ means function or relation composition)
and the rest of the variables are initialized to the default value $\undefbio$.
The rule $\textbf{(return)}$ for returning from a call updates the caller's local store with the values in the callee's
local store using the output map $\omap$.

The rule $\textbf{(fork)}$ for executing $\parstmt{\stmt_1}{\stmt_2}$ creates two leaf nodes for executing $\stmt_1$ and $\stmt_2$,
each node getting a copy of the local store of the parent.
The parent is blocked until both children nodes have finished executing.
Then, the modified variables from each child node are written back to the parent's local store in the rule $\textbf{(join)}$.
The type checker for \lang (described in \autoref{sec:types}) checks that the local variables modified
by $\stmt_1$ are not accessed by $\stmt_2$ and vice-versa.
This check ensures that the updates to the local store of the parent from the local stores of the children nodes
are conflict-free.
The $\modified$ function, shown in \autoref{fig:modified}, approximates the set of local variables that are modified by a statement.

Atomic actions execute directly in the context of the caller.
If the current store does not satisfy the gate of an executed action,
the execution stops in the \emph{failure configuration}~$\fail$ (rule $\textbf{(action fail)}$).
Otherwise, the execution takes a step according to the transition relation of the action (rule $\textbf{(action call)}$).
We refer to transitions in which an atomic action executes as an $action$ transition,
and all other transitions as $local$ transitions.
For any action transition $t_A$ of an action $A$ called with input mapping $\imap_A$ from a leaf node with local store $\ell$ in the frame,
we define the input parameter binding of the transition as $\inputbinding(t_A) = \ell \circ \imap_A$.

Rule $\textbf{(branch)}$ executes a conditional statement.
Rule $\textbf{(skip)}$ moves the evaluation context forward.
Rule $\textbf{(stop)}$ removes a finished tree from the multiset of trees.
Rule $\textbf{(atomic enter)}$ enters an atomic section and rule $\textbf{(atomic exit)}$ exits it.
Rules $\textbf{(par-reduce enter)}$ and $\textbf{(par-reduce exit)}$ are similar to $\textbf{(fork)}$ and $\textbf{(join)}$, respectively.
Rules $\textbf{(seq-reduce enter)}$ and $\textbf{(seq-reduce exit)}$ enter and exit a $\seqreduceempty$ block.


\section{Commutativity of Atomic Actions}
\label{sec:movers}
\vspace{-2mm}
We present basic concepts that will be used towards the end of this section 
to define a well-formed action map.
Intuitively, an action map $\as$ is well-formed if the mover type of each
atomic action is correct w.r.t. the entire pool of atomic actions in $\as$.
For the next few definitions, we fix two actions
$X = (\InVar_X, \OutVar_X, \gate_X, \trans_X, \_, \_)$ and $Y = (\InVar_Y, \OutVar_Y, \gate_Y, \trans_Y, \_, \_)$,

\vspace{-2mm}
\paragraph*{Weakest liberal precondition}
The weakest liberal precondition $\wlp(X, \gate_Y)$ is the set of all triples $(\globalStore, \ell_X, \ell_Y)$ such that
$X$ does not fail in $\globalStore \cc \ell_X$ 
and executing $X$ from $\globalStore \cc \ell_X$ always leads to a global store $\globalStore'$ where gate of Y holds
on $\globalStore' \cc \ell_Y$.
\begingroup
\small
\setlength{\jot}{2pt}
\begin{align*}
\wlp(X, \gate_Y)
&= \{ (\globalStore, \ell_X, \ell_Y) \mid
      \globalStore \cc \ell_X \in \gate_X \;\land \\
&\qquad
   \bigl(
   \forall \globalStore', \ell_X' :
   (\globalStore \cc \ell_X, \globalStore' \cc \ell_X') \in \trans_X
   \;\implies\;
   \globalStore' \cc \ell_Y \in \gate_Y
   \bigr)
   \}
\end{align*}
\endgroup

A consequence of this definition is that if a state satisfies the gate of $X$ but not $\wlp(X, \gate_Y)$,
then there is a way to execute $X$ and get to a state where gate of $Y$ does not hold.
This consequence, noted formally below, is used heavily in the proof of soundness of our reduction theorem.
\begingroup
\small
\setlength{\jot}{2pt}
\begin{align*}
\forall g, \ell_X, \ell_Y:\;&
   g \cc \ell_X \in \gate_X \;\land\;
   (g, \ell_X, \ell_Y) \notin \wlp(X, \gate_Y)
   \;\implies \\
&\qquad
   \exists \hat{g}, \hat{\ell}:\;
   (g \cc \ell_X, \hat{g} \cc \hat{\ell}) \in \trans_X
   \;\land\;
   \hat{g} \cc \ell_Y \notin \gate_Y
\end{align*}
\endgroup

\paragraph*{Commutes}
We say $X$ commutes with $Y$, denoted by $\commutes(X, Y)$,
if executing $X$ followed by $Y$ leads to a state that is also possible by executing $Y$ before $X$.
\begingroup
\small
\setlength{\jot}{2pt}
\begin{align*}
\forall g,g',\bar g,\ell_X,\ell_Y,\ell_X',\ell_Y' :\;&
(g,\ell_X,\ell_Y)\in\wlp(X,\gate_Y)\land(g,\ell_Y,\ell_X)\in\wlp(Y,\gate_X)\\
&\land (g\cc\ell_X,\bar g\cc\ell_X')\in(\gate_X\circ\trans_X)\land(\bar g\cc\ell_Y,g'\cc\ell_Y')\in(\gate_Y\circ\trans_Y)\\
&\implies \exists \hat g:\;
(g\cc\ell_Y,\hat g\cc\ell_Y')\in(\gate_Y\circ\trans_Y)\land(\hat g\cc\ell_X,g'\cc\ell_X')\in(\gate_X\circ\trans_X)
\end{align*}
\endgroup

\paragraph*{Success preservation}
We say $X$ \emph{preserves the success} of $Y$, denoted by \\ $\preservesSuccess(X, Y)$,
if whenever gate of $Y$ and gate of $X$ hold from a state, then any transition of $X$ leads to a state that also satisfies gate of $Y$.
\vspace{-1mm}
\begingroup
\small
\setlength{\jot}{2pt}
\begin{align*}
    \forall  g, \ell_X, \ell_Y : \;
    g\cc\ell_X \in \gate_X \land g\cc\ell_Y \in \gate_Y \implies (g, \ell_X, \ell_Y) \in \wlp(X, \gate_Y) \\[-6mm]
\end{align*}
\endgroup

\paragraph*{Failure preservation}
We say $X$ \emph{preserves the failure} of $Y$, denoted by\\ $\preservesFailure(X, Y)$,
if whenever $X$ does not fail from a state but $Y$ does, there exists a transition of $X$ from that state that leads to the failure of $Y$.
Equivalently, if $X$ does not fail from a state and cannot make a transition that leads to the failure of $Y$, then $Y$ does not fail
from that state. 
\begingroup
\small
\setlength{\jot}{2pt}
\begin{align*}
    \forall  g, \ell_X, \ell_Y : \;
    (g, \ell_X, \ell_Y) \in \wlp(X, \gate_Y) \implies g\cc\ell_Y \in \gate_Y
\end{align*}
\endgroup
\noindent
\textbf{Well-formed action map.}
We have 5 types: right-mover ($\rightmover$), left-mover ($\leftmover$), both-mover ($\bothmover$), non-mover ($\nonmover$), top ($\top$). 
These types form a lattice under a partial order defined as: $\bothmover \sqsubseteq a \sqsubseteq \nonmover \sqsubseteq \top$
where $a \in \{\rightmover, \leftmover\}$.
Let the join operator for this partial order be denoted by  $\sqcup$.
An action map $as$ is \emph{well-formed} if for all action $A \in \ActionName$,
if $\as(A)=(\InVar_A, \OutVar_A, \gate_A,\trans_A, \MoverVar_A, \mayFailVar_A)$,
then the following conditions are satisfied:




\vspace{-3mm}
{\small
\setlength{\abovedisplayskip}{4pt}
\setlength{\belowdisplayskip}{4pt}
\setlength{\abovedisplayshortskip}{2pt}
\setlength{\belowdisplayshortskip}{2pt}
\setlength{\itemsep}{2pt}
\setlength{\topsep}{2pt}
\begin{enumerate}
\item $M_A \neq \top$.

\item If $M_A \sqsubseteq \leftmover$, then for all $X \in \ActionName$:
\[
\begin{array}{@{}l@{\qquad}l@{}}
\condition{L1} & \preservesSuccess(X, A) \\
\condition{L2} & \preservesFailure(A, X) \\
\condition{L3} & \commutes(X, A)
\end{array}
\]

\item If $M_A \sqsubseteq \rightmover$, then for all $X \in \ActionName$:
\[
\begin{array}{@{}l@{\qquad}l@{}}
\condition{R1} & \preservesSuccess(A, X) \\
\condition{R2} & \commutes(A, X)
\end{array}
\]

\item If $\mayFailVar_A = \false$, then
$
\gate_A
=
\{\, \globalStore \cc \ell_A \mid
\globalStore \in \GlobalStore \land
\ell_A \in \LocalStore \land
\dom(\ell_A) = \InVar_A \,\}.
$
\end{enumerate}
}
\vspace{-1mm}

There are important differences between our conditions for a well-formed action map
and those in prior work on Civl~\cite{DBLP:conf/cav/KraglQ18}. Our conditions are weaker and we were able to identify these in the process of writing the proof for the soundness of our reduction techniques (\autoref{thm:maintheorem}).

First, our definition of $\commutes(X, Y)$ is weaker.
The corresponding condition for Civl only assumes $\gate_X$ and $\gate_Y$ in the initial state
while verifying that $X$ and $Y$ can be reordered.
In contrast, our condition makes the stronger assumption $\wlp(X, \gate_Y)$ and $\wlp(Y, \gate_X)$ in the initial state.
Intuitively, our check requires reordering $X$ and $Y$ only for those initial states
from which it is impossible to fail for any ordering of $X$ and $Y$.
\edit{The following example illustrates this difference. Let $S$ be a shared set, and define actions $A$ and $B$ by}
$
A(j):\; S := S \setminus \{j\},\;
B(i):\; \texttt{assert }\, i \in S
$
\edit{where the assertion is the gate of $B$. Under our commutativity condition, $A$ is a right mover and $B$ is a left mover, but this does not hold under Civl's original condition.}

Second, Civl performs two separate checks, backward preservation and nonblocking, for left movers.
Backward preservation requires a left mover $A$ to demonstrate for every action $X$ that
if $\gate_X$ holds after a step of $A$ then $\gate_X$ also holds before the step.
The nonblocking check requires $A$ to either fail or take a step from every initial state.
Instead, we have a single failure preservation check that requires less of $A$ than the combination of
backward preservation and nonblocking requirements.
Intuitively, our check requires $A$ to take a step only when trying to preserve a failure of the action $X$.
For example, if $X$ does not have any failing behavior, failure preservation would hold trivially,
but nonblocking check for $A$ could still be nontrivial.
\vspace{-2mm}

\section{Types for Reduction}
\label{sec:types}
\vspace{-2mm}
In this section, we exploit mover types of atomic actions to check that
the application of $\parreduceempty$ and $\seqreduceempty$ in a program
are applicable to achieve sound reduction.
We achieve this goal by defining two helper functions on statements---$\mayfail$ and $\movertype$.
The function $\mayfail$ propagates the failure types of actions to statements by using
a conservative static analysis.
The function $\movertype$ lifts mover types of actions to statements using an
effect system~\cite{DBLP:conf/pldi/FlanaganQ03}.
Together, these functions allow us to define a well-typed program which implies that
applications of reduction in the program are valid.
\vspace{-3mm}
\paragraph{\bf{Failure typing for statements}}
We compute $\mayfail(\stmt)$ for any statement $\stmt$ by conservatively propagating the failure types of atomic actions.
\begin{figure}
\vspace{-10mm}
\centering
\setlength{\arraycolsep}{2pt}
{\small
\[
\begin{array}{rcl}
\mayfail(\skipstmt) &=& \false \\
\mayfail(\callstmt(\action,\imap,\omap)) &=& \mayfail(\action) \\
\mayfail(\callstmt(\proc,\imap,\omap)) &=& \mayfail(\proc) \\
\mayfail(\seqstmt{\stmt_1}{\stmt_2}) &=& \mayfail(\stmt_1) \lor \mayfail(\stmt_2) \\
\mayfail(\parstmt{\stmt_1}{\stmt_2}) &=& \mayfail(\stmt_1) \lor \mayfail(\stmt_2) \\
\mayfail(\atomicstmt{\stmt}) &=& \mayfail(\stmt) \\
\mayfail(\parreduce{\stmt_1}{\stmt_2}) &=& \mayfail(\stmt_1) \lor \mayfail(\stmt_2) \\
\mayfail(\seqreduce{\stmt}) &=& \mayfail(\stmt) \\
\mayfail(\ifstmt{\lvar}{\stmt_1}{\stmt_2}) &=& \mayfail(\stmt_1) \lor \mayfail(\stmt_2) \\
\end{array}
\]
}
\vspace{-5mm}
\caption{The $\mayfail$ function.}
\label{fig:mayfail}
\vspace{-8mm}
\end{figure}

\vspace{-3mm}
\paragraph{\bf{Mover typing for statements}}
Given a well-formed action map $\as$ and a procedure signature map $\ps$, we define $\movertype$ function,
which assigns a mover type to a statement.
To define this function, we first define the sequential composition of mover types in the table below.
Using this table, we can define $\movertype$ recursively as follows:

\vspace{-7mm}
\begin{figure}
\vspace{-2mm}
\begin{minipage}{.6\textwidth}
{\small
\[
\begin{array}{rcl}
\movertype(\skipstmt) &=& \bothmover \\
\movertype(\parstmt{\stmt_1}{\stmt_2}) &=& \top \\
\movertype(\callstmt(\action,\imap,\omap)) &=& \movertype(\action) \\
\movertype(\callstmt(\proc,\imap,\omap)) &=& \movertype(\proc) \\
\movertype(\atomicstmt{\stmt}) &=& \movertype(\stmt) \\
\movertype(\seqreduce{\stmt}) &=& \movertype(\stmt) \\
\movertype(\parreduce{\stmt_1}{\stmt_2}) &=& \movertype(\stmt_1;\stmt_2) \\
\movertype(\seqstmt{\stmt_1}{\stmt_2}) &=& \movertype(\stmt_1);\movertype(\stmt_2) \\
\movertype(\ifstmt{\lvar}{\stmt_1}{\stmt_2}) &=& \movertype(\stmt_1) \sqcup \movertype(\stmt_2) \\
\end{array}
\]
}
\end{minipage}
\begin{minipage}{.35\textwidth}
\vspace{-9mm}
{\small
\begin{center}
    \begin{tabular}{ | c | c c c c  c |} 
     \hline
     ; & B & L & R & N & $\top$ \\ 
     \hline
     B & B & L & R & N & $\top$ \\ 
     R & R & N & R & N & $\top$ \\
     L & L & L & $\top$ & $\top$ & $\top$ \\
     N & N & N & $\top$ & $\top$ & $\top$ \\
     $\top$ & $\top$ & $\top$ & $\top$ & $\top$ & $\top$ \\
     \hline
    \end{tabular}
\end{center}
}
\end{minipage}
\vspace{-2mm}
\caption{The $\movertype$ function.}
\vspace{-6mm}
\end{figure}

\smallskip
For example, consider the following seq-reduce statement from the example in the overview.
The mover type assigned to the seq-reduce block will be $\nonmover$.

\lstset{frame=none, numbers=none}
\begin{lstlisting}
seq-reduce {
    par-reduce {(call r1[1] := read_f(1)) par (call r1[2] := read_f(2))};
    par-reduce {(call r2[1] := read_s(1)) par (call r2[2] := read_s(2))};
}
\end{lstlisting}

Each call to {\tt read\_f} within the first par-reduce has $\rightmover$ type.
The par-reduce statement then gets the mover type of $\rightmover;\rightmover$ = $\rightmover$ from the table.
Similarly, the second par-reduce contains calls to {\tt read\_s}, which are each $\leftmover$,
and the par-reduce statement gets the mover type of $\leftmover;\leftmover$ = $\leftmover$.
Now, the statement inside seq-reduce composes the two par-reduce blocks sequentially: the first has type $\rightmover$, and the second has type $\leftmover$.
The sequential composition $\rightmover;\leftmover$ results in an $\nonmover$ type, which is then assigned to the entire seq-reduce block.

The rules above imply that if $\stmt$ is nested inside $\stmt'$ and $\movertype(\stmt) = \top$ then $\movertype(\stmt') = \top$.
Since $\movertype(\parstmt{\stmt_1}{\stmt_2}) = \top$, 
if $\parstmt{\stmt_1}{\stmt_2}$ is nested inside a statement $\stmt$, then $\movertype(\stmt) = \top$.
We will return to this observation when we discuss the rules for well-typed programs below.
\vspace{-3mm}
\paragraph{\bf{Well-typed programs}}
We define a predicate $\welltyped(\stmt, \localvarset)$ where $\stmt \in \Stmt$ and $\localvarset \in 2^{\LocalVar}$.
The predicate $\welltyped(\stmt, \localvarset)$ is checking two aspects of a statement.
First, it checks that $\stmt$ only accesses the local variables it is allowed to.
This check is relevant because if $\stmt$ contains a nested occurrence of two statements $\stmt_1$ and $\stmt_2$ executing in parallel,
then local variables modified by these two statements must be disjoint from each other.
In fact, we check a stronger, yet simpler to check, property that local variables modified by $\stmt_1$ are neither read nor written by $\stmt_2$,
and vice-versa.
\update{This requirement is used only to simplify the formalization. One could consider fresh copies of local variables instead (note that the number of arms in a parallel construct is constant). In our implementation, the arms of the par construct can only be procedure calls, so this restriction just translates to requiring their input and output parameters to be disjoint, which is both natural and easy to check.}
Second, we check that applications of $\parreduceempty$ and $\seqreduceempty$ have the appropriate mover and failure types on their target statements.
This part of the check uses the previously defined functions $\movertype$ and $\mayfail$.
\vspace{-1mm}
\begin{figure}
\vspace{-8mm}
\centering
\setlength{\arraycolsep}{2pt}
{\small
\[
\begin{array}{rcl}
\welltyped(\skipstmt, \localvarset) &=& true \\
\welltyped(\atomicstmt{\stmt}, \localvarset) &=& \welltyped(\stmt, \localvarset) \\
\welltyped(\callstmt{(\procOrAction, \imap, \omap)}, \localvarset) &=& \img(\imap) \subseteq \localvarset \land \img(\omap) \subseteq \localvarset \\
\welltyped(\ifstmt{\lvar}{\stmt_1}{\stmt_2}, \localvarset) &=& \welltyped(\stmt_1, \localvarset) \land \welltyped(\stmt_2, \localvarset) \land \lvar \in \localvarset \\
\welltyped(\seqstmt{\stmt_1}{\stmt_2}, \localvarset) &=& \welltyped(\stmt_1, \localvarset) \land \welltyped(\stmt_2, \localvarset) \\
\welltyped(\parstmt{\stmt_1}{\stmt_2}, \localvarset) &=& \welltyped(\stmt_1, \localvarset - \modified(\stmt_2)) \land \\
                                                     & & \welltyped(\stmt_2, \localvarset - \modified(\stmt_1)) \\
\welltyped(\seqreduce{\stmt}, \localvarset) &=& \welltyped(\stmt, \localvarset) \land \movertype(\stmt) \sqsubseteq \nonmover \\
\welltyped(\parreduce{\stmt_1}{\stmt_2}, \localvarset) &=& \welltyped(\stmt_1, \localvarset - \modified(\stmt_2)) \land \\
                                                       & & \welltyped(\stmt_2, \localvarset - \modified(\stmt_1)) \land \\ 
                                                       & & (\movertype(\stmt_1) \sqsubseteq \leftmover \vee \\
                                                       & & \hspace{2mm}\movertype(\stmt_2) \sqsubseteq \rightmover \land \neg \mayfail(\stmt_2)) 
\end{array}
\]
}
\vspace{-4mm}
\caption{The $\welltyped$ function}
\label{fig:welltyped}
\vspace{-7mm}
\end{figure}

Most rules in the definition above are straightforward.
We take a closer look at the rules for parallel and sequential reduction, focusing on the mover-related checks.
$\parreduce{\stmt_1}{\stmt_2}$ checks that one of two cases apply:
either $\stmt_1$ is a left mover or $\stmt_2$ is a right mover and must not fail.
Intuitively, in both cases, all the code in $\stmt_1$ may be commuted before all the code in $\stmt_2$.
\update{For instance, here is an example illustrating why the right mover statement $\stmt_2$ is not allowed to fail.}
Let \code{x} be a shared integer variable, and let \code{read} and \code{inc} be actions, where \code{read} is a right-mover that first asserts \code{x > 0} (this corresponds to its gate) and then reads the value of \code{x} (exactly or less), and \code{inc} is a non-mover that increments \code{x} by 1 (it is not a left-mover because it does not satisfy failure preservation, and it is not a right-mover because it does not commute to the right of a read). 
Let Q be a procedure in the original program P, and assume that it gets reduced to Q' in the reduced program P' by an application of par-reduce (as shown in the snippet above). Assuming an initial state where \code{x = 0}, $P$ can fail (if \code{read} executes first) but $P'$ has no execution that fails (since \code{inc} always executes before \code{read}). This is problematic since the reduction ``hides'' failures which goes against sound reasoning.
Hence we require that right movers do not fail. This is what the helper function $\mayfail$ checks.

\begin{figure}[t]
\centering

\begin{minipage}[t]{0.52\textwidth}
\begin{lstlisting}
var x: int;

right action read() returns (out: int) {
  assert x > 0;
  assume out <= x;
}

action inc() {
  x := x + 1;
}
\end{lstlisting}
\end{minipage}\hfill
\begin{minipage}[t]{0.46\textwidth}
\begin{lstlisting}
procedure Q {
  par-reduce {
    call inc() par (call read())
  }
}

procedure Q' {
  call inc();
  call read();
}
\end{lstlisting}
\end{minipage}
\caption{Example illustrating need for right movers do not fail condition}
\vspace{-4mm}
\end{figure}


$\seqreduce{\stmt}$ checks that $\movertype(\stmt) \sqsubseteq \nonmover$ which implies that any execution
through $\stmt$ is of a form $\rightmover^* \cc \nonmover? \cc \leftmover^*$,
and therefore $\stmt$ can be converted to an atomic section~\cite{DBLP:conf/pldi/FlanaganQ03}.
In this case, the statement $\stmt$ must not have any unreduced parallel statement (whose mover type is $\top$) nested inside it.
But it is possible for $\stmt$ to have a reduced parallel statement (whose mover type could be different from $\top$) nested inside it.
This flexibility is important in practice and is particularly useful for our case studies described in \autoref{sec:evaluation}.

A program $\prog$ is {\em well-typed} if for all $Q \in \dom(\prog)$, if $\ps(Q) = (\_, \_, M, \mayFailVar)$ then
the following hold:
(1)~$\welltyped(\prog(Q), \LocalVar)$,
(2)~$\movertype(\prog(Q)) \sqsubseteq M$, and
(3)~$\mayfail(\prog(Q)) \Rightarrow \mayFailVar$.
The well-typed predicate is computed separately for each procedure $Q$ in the program using the signatures of all procedures
and actions called by $Q$.
First, the body of $Q$ is checked to be well-typed w.r.t. the set of all local variables.
Second, the mover type of the body of $Q$ must be stronger than the annotated mover type $M$ of $Q$.
This check ensures that the type checking of procedures that call $Q$ will succeed even if $Q$ was inlined at the call site.
In essence, this means that the mover type of any procedure is valid regardless of the specific execution taken in completing a
call to that procedure.
Third, the failure type of $Q$ is checked to be a conservative approximation of the failure type of the body of $Q$.
We use the notion of well-typed programs in \autoref{sec:refinement} to state the soundness theorem of our reduction technique.

\section{Reduction for \textsf{RedPL} Programs}
\label{sec:refinement}

In this section, we give a meaning to the $\stmtfont{seq \mhyphen reduce}$ and $\stmtfont{par \mhyphen reduce}$ annotations in \lang programs, and state the related soundness theorem. Soundness is stated in terms of a \emph{refinement} relation between programs that we define hereafter.

A configuration $(\globalStore, \ThreadPool)$ is {\em initial} if it contains an arbitrary number of threads that are about to execute a well-typed statement, i.e., for all $\Tree \in \ThreadPool$,
there exist $\localStore$ and $\stmt$ such that
$\Tree = Lf (\localStore, \stmt)$, $\welltyped(\stmt, \LocalVar)$, and $\stmt$ is atomic-free and reduce-free.
A configuration $(\globalStore, \ThreadPool)$ is {\em final} if $\ThreadPool = \emptyset$,
i.e., all threads have finished executing successfully.
The failure configuration $\fail$ is also final.

Given two well-typed programs $\prog$ and $\prog'$, we say $\prog$ refines $\prog'$ (denoted $\prog \refines \prog'$) if the following two properties hold for all initial configurations $(\globalStore, \ThreadPool)$:
\\
\condition{P1}~ If there is an execution of $\prog$ that fails from the initial configuration $(\globalStore, \ThreadPool)$, then there also is an execution of $\prog'$ that fails from the same initial configuration:
\vspace{-5mm}
\begingroup
\small
\setlength{\jot}{2pt}
\begin{align*}
        (\globalStore, \ThreadPool) \stepp[\prog]\transitive \fail
    \implies (\globalStore, \ThreadPool) \stepp[\prog']\transitive \fail
\end{align*}
\endgroup
%
\condition{P2}~
If there exists an execution of $\prog$ starting from the initial configuration $(\globalStore, \ThreadPool)$ that reaches the final configuration $(\globalStore', \emptyset)$,
then there also exists an execution of $\prog'$ from the same initial configuration, that either reaches the same final configuration or results in a failure:
\vspace{-2mm}
\begingroup
\small
\setlength{\jot}{2pt}
\begin{align*}
    (\globalStore, \ThreadPool) \stepp[\prog]\transitive (\globalStore', \emptyset) \implies (\globalStore, \ThreadPool) \stepp[\prog']\transitive  (\globalStore', \emptyset) \lor (\globalStore, \ThreadPool) \stepp[\prog']\transitive  \fail \\[-5mm]
\end{align*}
\endgroup
The refines relation is transitive, i.e., if $\prog_1 \refines \prog_2$ and $\prog_2 
\refines \prog_3$, then $\prog_1 \refines \prog_3$.

The notation $\stmt[\stmt_2 / \stmt_1]$ denotes the result of replacing all occurrences of statement $\stmt_1$ with statement $\stmt_2$ inside the statement $\stmt$.
Similarly, the notation $\prog[\stmt_2 / \stmt_1]$ denotes a new program in which, for every procedure $Q$,
all occurrences of $\stmt_1$ in the body of $Q$, as defined in $\prog$, are replaced with $\stmt_2$:
\begingroup
\small
\setlength{\jot}{2pt}
\begin{align*}
    \prog[\stmt_2 / \stmt_1] = \{Q \mapsto \prog(Q)[\stmt_2 / \stmt_1] \mid Q \in \ProcName\}
\end{align*}
\endgroup

A statement $\stmt$ is {\em terminating} in program $\prog$ if
$\prog$ does not have any infinite executions
from any configuration $(\globalStore, \Leaf{(\localStore, \stmt)})$ such that $\welltyped(\stmt, \LocalVar)$.
A well-typed program $\prog$ is {\em terminating} if for all $Q \in \dom(\prog)$ such that $\movertype(Q) \sqsubseteq \leftmover$,
we have $\prog(Q)$ is terminating in $\prog$.

\begin{theorem}
    \label{thm:maintheorem}
Let $\sourceprog$ be an atomic-free and well-typed program.
Let
\begingroup
\small
\setlength{\jot}{2pt}
\begin{align*}
    \interprog &= \sourceprog[\seqstmt{\stmt_1}{\stmt_2} \;/\; \parreduce{\stmt_1}{\stmt_2}]\\
    \reducedprog &= \interprog[\atomicstmt{\stmt} \;/\; \seqreduce{\stmt}]
\end{align*}
\endgroup
Then, the programs $\interprog$ and $\reducedprog$ are well-typed.
Furthermore, if $\interprog$ is terminating, then: 
(1) $\sourceprog \refines \interprog$, and 
(2) $\interprog \refines \reducedprog$. Therefore, $\sourceprog \refines \reducedprog$. 
\end{theorem}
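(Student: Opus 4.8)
The plan is to establish, in order, well-typedness of $\interprog$ and $\reducedprog$, then the two refinements $\sourceprog \refines \interprog$ (parallel reduction) and $\interprog \refines \reducedprog$ (sequential, Lipton-style reduction), and to conclude $\sourceprog \refines \reducedprog$ by transitivity of $\refines$. Well-typedness is the routine part: both rewrites are \emph{type-transparent}, i.e.\ $\movertype(\parreduce{\stmt_1}{\stmt_2}) = \movertype(\seqstmt{\stmt_1}{\stmt_2})$, $\movertype(\seqreduce{\stmt}) = \movertype(\atomicstmt{\stmt}) = \movertype(\stmt)$, and $\mayfail$ and $\modified$ are unchanged, so the clauses $\movertype(\prog(Q)) \sqsubseteq M$ and $\mayfail(\prog(Q)) \Rightarrow \mayFailVar$ survive; moreover $\welltyped(\cdot,\localvarset)$ is monotone in $\localvarset$, and the $\welltyped$ clauses for $\parreduceempty$ and $\seqreduceempty$ are strictly stronger than those for the corresponding $\stmtfont{;}$ and $\atomicstmtempty$ statements (same recursive obligations on smaller variable sets, plus extra mover-side conditions), so a structural induction shows both rewrites can only weaken well-typedness.

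For $\sourceprog \refines \interprog$ I would take a finite execution of $\sourceprog$ ending in $\fail$ (for \condition{P1}) or in a final configuration (for \condition{P2}) and reorder it into an execution of $\interprog$ with the same outcome. The workhorse is a \emph{commutation lemma}: two consecutive transitions of concurrently-running threads can be swapped---reaching the same configuration, or $\fail$, and never erasing a failure---whenever the first is a left mover or the second is a right mover; this is precisely what conditions \condition{L1}--\condition{L3} and \condition{R1}--\condition{R2} provide, the $\wlp$ hypotheses of $\commutes$ being discharged via $\preservesSuccess$ on the no-failure branch and $\preservesFailure$ used to relocate a gate violation, and it is here that the weakened action-map conditions are calibrated to be exactly sufficient. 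Each dynamic occurrence of a par-reduce block is then \emph{serialized}: when the $\welltyped$ witness is ``$\stmt_2$ is a right mover and $\neg\mayfail(\stmt_2)$'', push every transition of the $\stmt_2$-subtree rightward past all the $\stmt_1$-subtree transitions and the join (the non-failure of $\stmt_2$ is what lets it be moved past everything without creating or destroying a failure); when it is ``$\stmt_1$ is a left mover'', dually push all of $\stmt_1$ leftward to run to completion right after the fork. The rearranged execution matches $\interprog$ up to administrative fork/join/skip steps, and the $\welltyped$ disjointness of the arms' modified variables makes the merged local store at the join agree with sequential execution of $\seqstmt{\stmt_1}{\stmt_2}$. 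The termination hypothesis on $\interprog$ enters in the left-mover case: a left-mover statement calls only left-mover procedures (by the $\movertype$ rules), each terminating in $\interprog$, hence it terminates in isolation, and since interfering transitions commute out of a left-mover computation it terminates under interference as well---so $\stmt_1$ is guaranteed to finish, and a failure occurring after it in $\sourceprog$ can be reproduced after $\stmt_1$ in $\interprog$ rather than being unsoundly hidden behind a divergent $\stmt_1$. Since one annotation may be exercised unboundedly often along an execution (e.g.\ through recursion), the serialization is organized as a well-founded induction on the multiset of block lifetimes ordered by nesting depth and start position, each step strictly decreasing the measure while keeping the execution valid and its outcome fixed.

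For $\interprog \refines \reducedprog$ I would reuse the same commutation lemma. Since $\interprog$ is atomic-free and par-reduce-free, a seq-reduce block $\stmt$ has $\movertype(\stmt) \sqsubseteq \nonmover$, so every run through it has the shape $\rightmover^{*}\,\nonmover?\,\leftmover^{*}$; delaying each right-mover transition (rightward, past interfering steps) until it abuts the single non-mover step, and hastening each left-mover transition (leftward) until it abuts it from the other side, makes all transitions of the block contiguous---i.e.\ yields an execution of the program that runs the block inside $\atomicstmtempty$, which is $\reducedprog$. A failure inside the block is preserved because the failing step is either the non-mover step or a right mover, and the right movers preceding it are only moved toward, never across, that gate violation, so the absence of a non-failure requirement on seq-reduce is harmless. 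Nested seq-reduce blocks are collapsed innermost-first: once an inner block has become atomic it behaves as a single macro-action of its mover type, and the argument applies to the next enclosing block; the unbounded number of dynamic occurrences is again absorbed by induction on the nesting-and-position measure. Transitivity of $\refines$ then yields $\sourceprog \refines \reducedprog$.

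I expect the main obstacle to be the execution-reordering argument shared by both refinements: because a single syntactic reduction applies an unbounded number of times in one execution, one cannot reason block-by-block on the program text but must design a reordering strategy on executions together with a well-founded measure for it, while tracking precisely which swaps may legitimately turn a non-failing run into a failing one versus which would illegitimately erase a failure. The deliberate asymmetries of $\welltyped$---right-mover arms of par-reduce must be non-failing, left-mover procedures must be terminating even though that is a property of sequential executions---are exactly what keep each reordering sound, and making the case analysis of the commutation lemma line up with these hypotheses (rather than Civl's stronger ones) is the delicate heart of the proof.
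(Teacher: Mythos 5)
Your overall architecture---well-typedness via type-transparency of the two rewrites, then execution reordering driven by the mover conditions, with an induction over the unboundedly many dynamic $\parreduceempty$ instances, Lipton-style contiguity for $\seqreduceempty$, and transitivity of $\refines$ at the end---coincides with the paper's, and your parallel-reduction half is essentially the paper's argument: you correctly locate the role of $\neg\mayfail(\stmt_2)$ (when $\stmt_1$ is incomplete at the failure, the transitions of $\stmt_2$ must be pushed past the failure, i.e., eliminated, which would be unsound if the failure could belong to $\stmt_2$) and of the termination hypothesis (completing a left-mover $\stmt_1$). One inaccuracy: you neither need nor may freely claim that a left-mover $\stmt_1$ ``terminates under interference''; the paper completes $\stmt_1$ by \emph{extending} the failing execution with an interference-free run matching $\seq(\stmt_1)$, so termination in isolation---which is exactly what the hypothesis on $\interprog$ provides---suffices, and your stronger claim would itself require proof.

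The genuine gap is in the sequential half. First, you only treat $\seqreduce{\stmt}$ instances that run to completion or fail inside the block; you never say what happens to instances that are \emph{partially executed} when the failure occurs elsewhere. In $\reducedprog$ such a block is atomic and cannot be left half-done, so the rewriting must either eliminate its transitions (the paper's uncommitted blocks, which is possible precisely because such a residue consists of right movers only and no commit has happened) or complete it (possible because past the commit only left movers remain---and this completion is a second use of the termination hypothesis that your sketch omits). Second, ``delay right movers, hasten left movers'' is not by itself a well-defined, terminating procedure when unboundedly many blocks from different threads interleave: two right-labeled transitions of different blocks may each demand to move past the other, and the target position is not even defined for a block with no commit point. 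The paper resolves this with the commit-transition and per-transition labeling machinery, a fixed total order on uncommitted blocks, and a three-case swap condition, and then proves that the swap process preserves failures and terminates (restarting whenever a newly introduced failure shortens the execution) and that its fixpoints have no uncommitted blocks and no before- or after-commit interference. You flag the need for ``a well-founded measure'' but do not supply this strategy, and it is the core of the proof rather than a routine detail. Relatedly, your innermost-first collapse, which treats an already-reduced inner block as a ``macro-action of its mover type,'' steps outside the framework: conditions \condition{L1}--\condition{L3} and \condition{R1}--\condition{R2} are stated for atomic actions, not compound atomic blocks, so this route would need an additional lemma lifting them to sequences; the paper instead works outermost-first and labels the individual transitions of the whole outermost block, so that nested instances are handled for free.
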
 

\noindent\textbf{[Proof Sketch]}
First, we establish that $\sourceprog \refines \interprog$, thereby showing that it is sound to sequentialize the concurrent behavior inside $\stmtfont{par\mhyphen reduce}$.
We prove refinement properties~\condition{P1} and~\condition{P2} separately. The top-level strategy is to rewrite an execution of $\sourceprog$ into an execution of $\interprog$ such that, for each application of the $\stmtfont{par\mhyphen reduce}$ rule, the statement $\stmt_1$ executes before $\stmt_2$, using an induction on the number of unreduced $\parreduceempty$ applications. Note that $\interprog$ eliminates all occurrences of $\parreduceempty$, including those nested inside $\seqreduceempty$.
As a consequence, there is no parallelism within any $\seqreduceempty$ application.

Second, we establish that $\interprog \refines \reducedprog$, thereby showing that it is sound to define atomic sections for all code blocks inside $\stmtfont{seq\mhyphen reduce}$. This step also proceeds by induction on the number of unreduced $\seqreduceempty$ applications. We show this by rewriting an execution of $\interprog$ into an execution of $\reducedprog$ in which each code block inside $\stmtfont{seq\mhyphen reduce}$ is of the form $\rightmover^* \nonmover? \leftmover^*$.

See Appendix~\ref{appendix:main} for more details.

\section{Implementation}
\vspace{-3mm}
\label{sec:implementation}
We have implemented our proof rule in \civl~\cite{DBLP:conf/fmcad/KraglQ21} verifier
for layered concurrent programs~\cite{DBLP:conf/cav/KraglQ18}.
Our implementation covers every aspect of our formalization except for the side conditions
on termination of left-mover procedures and absence of failures in right-mover statements.
For the examples reported in \autoref{sec:evaluation}, the verification of these
side conditions was done manually.

\civl is an extension of the Boogie verifier~\cite{DBLP:conf/fmco/BarnettCDJL05} for sequential programs.
Similar to Boogie, 
the implementation of \civl is broadly split into a type checker and a verification-condition generator.
The type checker handles basic type analysis and checks in addition that layer annotations on
variables, yield invariants, actions, and procedures are consistent with each other.
It also checks that the mover type of each procedure is consistent with the mover type inferred from
the body of the procedure.
The verification-condition generator in \civl eliminates all concurrency features from the input program
and produces a collection of sequential procedures annotated with specfications.
These sequential procedures encode checks related to mover types of atomic actions, refinement checks
for each procedure, and noninterference checks related to yield invariants.
The sequential procedures are processed by the standard Boogie flow that converts each procedure to a
logical constraint and checks it using an SMT solver.

Our implementation modifies and extends \civl as follows.
First, we modified the verification conditions generated for checking mover types of actions
according to the rules laid out for well-formed action map in \autoref{sec:movers}.
The revised rules are more general and therefore applicable in more scenarios.
The revised failure preservation check for left movers
provide an easier mental model while debugging unsuccessful proofs.

Second, we added mover types to procedures and implemented the checking of these types
against procedure bodies, as described in \autoref{sec:types}.
Our type checker also accounts for mover types of procedures in determining the degree of
program transformation allowed between successive program layers.
The ability to reduce programs in a fully nested manner, as described in \autoref{sec:refinement},
is important to enable a single layer to perform a large chunk of the proof, thus reducing the proof
overhead of layers.
We allow procedures annotated with mover types to be summarized using preconditions and postconditions.
Atomic code fragments with calls to such procedures can now be analyzed without inlining these procedures.
Our implementation also handles loops directly in the same manner as recursive~procedures.

Finally, we allow parallel calls to be reduced using the parallel reduction technique introduced
in this paper.
We provide parallel execution of procedure calls rather than parallel execution of statements.
This design choice simplified parameter passing between the caller and the callees.
The modified local variable analysis described in this paper is unnecessary and replaced by a simple
check that the output variables used across all arms of a parallel call are all distinct from each other.
Our implementation includes, in the same framework, the proof rule for synchronizing asynchronous calls
reported earlier~\cite{DBLP:conf/concur/KraglQH18}.

Our formalization of \lang in \autoref{sec:language} makes explicit every application of 
$\parreduceempty$ and $\seqreduceempty$ in the source program.
These annotations are not explicitly declared in a \civl program;
instead, our type checker automatically infers information equivalent to them.

\section{Evaluation}
\label{sec:evaluation}
\vspace{-2mm}
We evaluate the implementation described above on a diverse set of challenging case studies: a parallelized snapshot object (\autoref{sec:overview}), the classic message-passing simulation of shared memory by Attiya, Bar-Noy, and Dolev~\cite{DBLP:journals/jacm/AttiyaBD95} (ABD), an implementation of the FLASH cache coherence protocol~\cite{DBLP:conf/isca/KuskinOHHSGCNBHGRH94}, and a version of the Two-Phase Commit protocol. These implementations naturally decompose into procedures and make significant use of dynamic thread creation. Message passing is modeled in the style of RPC: broadcasting and waiting for responses is expressed as a parallel composition of procedure calls, each modifying the receiver's state and returning an acknowledgment.

\begin{wrapfigure}{l}{0.37\textwidth}
\captionsetup{type=table} 
\centering
{\small  

\vspace{-7mm}
\newcommand{\lbla}[1]{$\bf #1$}
\newcommand{\lblb}[2]{\hspace{-0.8mm}$\def\arraystretch{0.7}\begin{array}{c}\text{\scriptsize\bf #1}\\\text{\tiny #2}\end{array}$\hspace{-0.8mm}}

\begin{tabular}{lccc}
\toprule
\lbla{Example} &
\lblb{\#LOC}{Total} &
\lblb{\#LOC}{Impl \& Spec} &
\lblb{Time}{sec} 
\\ \midrule
Snapshot                    & 119  & 82 &   0.4\\
ABD                         & 389 &  206 &  1.4 \\
Coherence             &  608 & 401 &  5 \\
2PC                         & 146  &  111 &  1.8 \\
\bottomrule
\end{tabular}
\caption{Evaluation metrics.}
\label{tab:examples}
\vspace{-8mm}
}
\end{wrapfigure}

The evaluation shows that each case study involves substantial nesting of parallel and sequential reductions, leveraging both left and right mover types. This approach helps avoid the need for complex invariants that would otherwise arise from fine-grained interleavings. As common in Civl, the proofs are decomposed into a sequence of refinement steps, some of these steps being ``abstraction'' steps that are not related to reduction (they introduce non-deterministic abstractions or ghost variables). Also, these proofs rely on using the other features of Civl, e.g., inductive invariants and permission-based reasoning. The latter is particularly useful to enable commutativity reasoning. Many times, commutativity between actions is implied by the distinctness of (some of) their inputs and this is encoded using permissions.
 
~\autoref{tab:examples} presents quantitative metrics from our evaluation. We report the total lines of code for each proof, along with a separate count for the implementation and specification components. We also report that the wall-clock time required for executing each proof. The ratio of proof annotation lines to the combined lines of implementation and specification ranges from 0.31 (for Two-Phase Commit) to 0.88 (for ABD). Also, all these proofs can be completed in few seconds.
\edit{In addition, the Civl repository contains approximately 50 further examples, including larger benchmarks (on the order of hundreds to thousands of lines) where the seq-reduce rule applies. We focused on the four case studies in the paper because they incorporate both seq-reduce and par-reduce, and therefore most clearly illustrate the interaction and use of the two rules.}


In the following, we give more details about each case study, except for the snapshot object that we already described in \autoref{sec:overview}. We present the implementation and the specification we prove, and the use of reduction. The proof files are available in the supplementary material.

\subsection{The ABD register}

\begin{figure}
  \vspace{-8mm}
\centering
\begin{minipage}{.60\textwidth}
\begin{lstlisting}[numbers=none]
type TimeStamp // a set with a total order and a lower bound
TS: TimeStamp // global timestamp used to order operations
value_store: Map TimeStamp Value 

procedure ReadClient(pid) returns (val) {
  old_ts := Begin(pid)
  ts, val := Read(pid, old_ts)
  End(pid, ts);
}

procedure WriteClient(pid, val) {
  old_ts := Begin(pid)
  ts := Write(pid, old_ts)
  End(pid, ts);
}
\end{lstlisting}
\end{minipage}\hfill
\begin{minipage}{.41\textwidth}
\begin{lstlisting}[numbers=none]
action Begin(pid) returns (ts) {
  ts = TS;
}
action Read(pid, old_ts) returns (ts, val) {
  assume old_ts <= ts
  assume ts in value_store
  val := value_store[ts]
}
action Write(pid, val) returns (ts) {
  assume old_ts < ts
  assume ts not in value_store
  value_store[ts] := val
}
action End(pid, ts) {
  TS := max(TS, ts)
}
\end{lstlisting}
\end{minipage}
\vspace{-4mm}
\caption{A linearizable specification for ABD. The \code{TS} global variable models a clock which is read when operations start and advances when they end.}
\label{fig:ABDabstract}
\vspace{-7mm}
\end{figure}

The ABD algorithm implements a read-write register (shared memory) on top of message passing. It provides two operations \code{Read} and \code{Write} on a register that is replicated across $n$ replicas for fault tolerance. Less than half of the replicas can crash. The operations are invoked in parallel by a collection of clients.

Each replica stores a timestamped value (timestamp, value) where timestamp comes from a totally ordered set with a lower bound (e.g., natural numbers). Both \code{Read} and \code{Write} operations have two phases: the \code{QueryPhase} and the \code{UpdatePhase}. In the \code{QueryPhase}, they send \code{Query} messages to all replicas in parallel, wait for a quorum (at least half) of replies, and retrieve the reply \code{(t, v)} with the maximum timestamp \code{t} among the responses. Then they enter the \code{UpdatePhase}, where they send \code{Update} messages to all replicas. The \code{Read} operation sends an \code{Update(t, v)} message, while the \code{Write} operation sends an \code{Update(t+1, v')} where \code{v'} is the new value to be written. They both wait for a quorum of acknowledgements before returning. When receiving an \code{Update(t, v)} message, a replica updates its store to \code{(t, v)} if $t$ is greater than its current timestamp, and replies with an acknowledgment regardless of whether it updates. Upon receiving a \code{Query} message, it replies with its current copy. 

\noindent
\textbf{Specification: Linearizability.} Our goal is to prove that this register implementation is linearizable, which we reduce to a refinement check. To capture the condition that each operation appears to take effect atomically between its call and return, we instrument operations to read a global clock at the start, which advances when they end. This clock is aligned with ABD timestamps: each \code{Read} returns a value with a timestamp greater than or equal to the clock at invocation, and each \code{Write} writes a value with a timestamp strictly greater than the clock at invocation. At the end of an operation, the clock advances as the timestamp read/written by that operation. The resulting specification is shown in \autoref{fig:ABDabstract}, where \code{ReadClient} and \code{WriteClient} wrap the respective operations with \code{Begin} and \code{End} actions for clock access. Each method contains a single ``internal'' step (a call to the action \code{Read} or \code{Write}), reflecting the requirement that they take effect instantaneously. The map \code{value\_store}, updated by \code{Write} and accessed by \code{Read}, ensures that reads return previously written values.

We prove that the ``concrete'' versions of \code{ReadClient} and \code{WriteClient} where the calls to the actions \code{Read} or \code{Write} are replaced by calls to the homonymous ABD procedures are a refinement of the abstract specification in \autoref{fig:ABDabstract}.

\vspace{.5mm}
\noindent
\textbf{Applying reduction.} The goal is to apply reduction to show that the ABD procedures \code{Read} and \code{Write} can be rewritten to execute within a \emph{single} atomic section, which is then shown to refine the  abstract \code{Read} and \code{Write} actions in \autoref{fig:ABDabstract}. To achieve this, we introduce an abstraction of the \code{Query} handler which is a right mover, which in turn ensures that the \code{Update} handler becomes a left mover. This abstraction enables parallel reductions in both the \code{QueryPhase} and \code{UpdatePhase}; for example, in the \code{QueryPhase} shown below, it allows the recursive procedure to be reduced by sequentializing all \code{Query} handlers (which initially happened in parallel). This sequentialization, in turn, enables a sequential reduction within the \code{Read} and \code{Write} procedures—illustrated below for \code{Read}—since the \code{Query} handlers (right movers) are followed by \code{Update} handlers (left movers).

\begin{figure}[t]
\centering

\begin{minipage}[t]{0.54\textwidth}
\begin{lstlisting}
right procedure {:layer 3} QueryPhase(i: int, old_ts: TimeStamp)
  returns (max_ts: TimeStamp, max_value: Value) {
  ...
  par-reduce {
    call max_ts, max_value :=
      QueryPhase(i + 1, old_ts)
    par call ts, value :=
      Query(i, old_ts)  // right
  }
  if (less_than(max_ts, ts)) {
    max_ts := ts; max_value := value;
  }
}
\end{lstlisting}
\end{minipage}\hfill
\begin{minipage}[t]{0.44\textwidth}
\begin{lstlisting}
procedure Read(pid: ProcessId, old_ts: TimeStamp) returns (ts: TimeStamp, value: Value) {
  ...
  seq-reduce {
    call ts, value :=
      QueryPhase(0, old_ts);   // right
    call UpdatePhase(0, ts, value); // left
  }
}
\end{lstlisting}
\end{minipage}
\vspace{-7mm}
\caption{$\texttt{QueryPhase}$ and $\texttt{Read}$ procedures.}
\vspace{-9mm}
\end{figure}

The abstraction of \code{Query} allows it to return a timestamp lower than the current replica timestamp. However, it cannot return any arbitrarily low timestamp---it should be greater than or equal to the clock timestamp \code{old\_ts} obtained in \code{Begin}. This reduction eliminates the need for an inductive invariant that tracks relationships between read and write operations across different stages of their query or update phases.

\subsection{Cache coherence}

We implement the FLASH cache coherence protocol ~\cite{DBLP:conf/isca/KuskinOHHSGCNBHGRH94} which is a directory-based MESI cache coherence protocol. The protocol manages consistency across multiple caches in a shared memory multiprocessor system using a centralized directory. 

\autoref{fig:coherenceVariables} shows our model for the memory (\code{mem}), directory (\code{dir}) and a set of caches (\code{cache}). 
The memory is indexed by memory addresses (\code{MemAddr}), while each cache uses local cache addresses (\code{CacheAddr}) for indexing. Because the memory address space is larger than the cache address space, a hash function maps each memory address to a cache address, allowing multiple memory addresses to correspond to the same cache address. Each cache stores data in cache lines, which contain a memory address, a value, and a state (\code{Modified, Exclusive, Shared, Invalid}). The directory tracks the status of each memory address across all caches. If a memory address is held in the Modified or Exclusive state by a cache, the directory records this as \code{Owner(i)}, where \code{i} is the ID of the owning cache. Otherwise, the directory records it as \code{Sharers(iset)}, where \code{iset} is the set of cache IDs having the memory address in Shared state.

\begin{figure}[t]
\centering

\begin{minipage}[t]{0.54\textwidth}
\begin{lstlisting}
type MemAddr; type CacheAddr;
function Hash(MemAddr): CacheAddr;
datatype State
  {Modified(), Exclusive(), Shared(), Invalid()}
datatype CacheLine
  {CacheLine(ma: MemAddr, value: Value, state: State)}
datatype DirState
  {Owner(i: CacheId), Sharers(iset: Set CacheId)}
\end{lstlisting}
\end{minipage}\hfill
\begin{minipage}[t]{0.44\textwidth}
\begin{lstlisting}
// Implementation state
var {:layer 0,2} mem: [MemAddr]Value;
var {:layer 0,2} dir: [MemAddr]DirState;
var {:layer 0,2} cache: [CacheId][CacheAddr]CacheLine;
// Specification state
var {:layer 1,3} absMem: [MemAddr]Value;
\end{lstlisting}
\end{minipage}

\vspace{-2mm}
\caption{State representation for cache coherence protocol}
\label{fig:coherenceVariables}
\vspace{-4mm}
\end{figure}

We implement 5 top-level operations on the cache: 
\begin{itemize}
    \item \code{cache\_read} and \code{cache\_write} which read and write a cache entry, respectively. 
    \item \code{cache\_evict\_req} initiates eviction of a cache line.
    \item \code{cache\_read\_shd\_req} and \code{cache\_read\_exc\_req} initiate bringing a memory address into the cache in Shared and Exclusive mode, respectively.
\end{itemize}

\vspace{-2mm}
We now detail the operation of \code{cache\_read\_exc\_req} and the interactions between cache and directory. The cache initiates an exclusive request to the directory via \code{dir\_read\_exc\_req} shown in \autoref{fig:dir_read_exc_req}. If the directory state for the requested memory address is \code{Owner}, it sends an invalidate request to the owner by calling \code{cache\_invalidate\_exc}. The owner is then expected to change its state to \code{Invalid} and send the data back to the directory, which writes it to memory by calling \code{write\_mem}. If the directory state is \code{Sharers}, it sends invalidate requests in parallel to all caches in the sharers list by invoking \code{cache\_invalidate\_shd}. The directory then reads the memory by calling \code{read\_mem} and sends the response back to the orginal cache via \code{cache\_read\_resp}, and ending the request by calling \code{dir\_req\_end}.

\begin{figure}[t]
\begin{minipage}{.58\textwidth}
\begin{lstlisting}
procedure dir_read_exc_req(i: CacheId, ma: MemAddr)
{
 ... // variable initialization
 seq-reduce {
   call dirState := dir_req_begin(ma); // right
   if (dirState is Owner) {
     call value := cache_invalidate_exc
                      (dirState->i, ma, Invalid()) // non-mover
     call write_mem(ma, value); // both mover
   } 
   else {
     call dp := invalidate_sharers(ma, dirState->iset); // left
     call value := read_mem(ma); // both mover
   }
   call cache_read_resp(i, ma, value, Exclusive()); // left
   call dir_req_end(ma, Owner(i)); // left
 }
}
\end{lstlisting}
\end{minipage}
\hspace{-4mm}
\begin{minipage} {.4\textwidth}
\begin{lstlisting}
left procedure invalidate_sharers
(ma: MemAddr, victims: Set CacheId) {
  ...
  if (victims == Set_Empty()) {
    return;
  }
  victim := Choice(victims->val);
  victims' := Set_Remove(victims, victim);
  par-reduce { // left
   call cache_invalidate_shd(victim, ma, Invalid()) 
   par call invalidate_sharers(ma, victims') 
  }
}
\end{lstlisting}
\end{minipage}
\vspace{-4mm}
\caption{Reduction applied at directory for exclusive state request}
\label{fig:dir_read_exc_req}
\vspace{-7mm}
\end{figure}

\smallskip
\noindent
\textbf{Specification.} We introduce an abstract memory, \code{absMem}. The goal is to show that the \code{cache\_write} and \code{cache\_read} operations are refinement of atomic actions that directly read and write from \code{absMem}. We hide directory and all cache operations that interact with it. This specification naturally captures the cache coherence property.

\smallskip
\noindent
\textbf{Applying reduction.} After introducing the ghost variable \code{absMem} used in the specification, we define a number of abstractions of actions used in the implementation that become movers. The memory operations \code{read\_mem} and \code{write\_mem} are made both movers, the shared invalidate request \code{cache\_invalidate\_shd} is made a left mover, the response to a read request at a cache \code{cache\_read\_resp} is made a left mover, the actions \code{dir\_req\_begin} and \code{dir\_req\_end} for reading and updating the directory state are made right and left movers, respectively. 
These movers enable reduction at many sites in the implementation. In particular, they enable parallel reduction in the invalidate loop (\code{invalidate\_sharers}) to sequentialize them, and subsequently, a sequential reduction on the entire body of the procedure \code{dir\_read\_exc\_req} for bringing a memory address into the cache in Exclusive mode. This is made precise in \autoref{fig:dir_read_exc_req}.

The reduction helps a refinement proof to hide the directory and all the caches so that the read and write operations at cache are refinements of the atomic operations over \code{absMem}. 

\subsection{Two-phase Commit (2PC)}
\vspace{-1mm}
Two-phase Commit is a classic distributed protocol used to implement concurrent transactions. A number of \emph{coordinator} processes make a number of \emph{replicas} agree on an order between concurrent transactions. Each transaction is associated with a start time and an end time, and it is submitted to a single coordinator. Two transactions conflict if their time intervals overlap. The goal is to ensure that all committed transactions are not conflicting pairwise. 

A coordinator runs in two phases. In the vote phase, it sends vote requests to all replicas which reply with \code{YES} or \code{NO} (accept or not a transaction). A replica stores the set of pending transactions (which are not yet committed or aborted) for which it already voted \code{YES} in a so-called \emph{locked set}, and it answers \code{YES} iff the incoming vote request concerns a transaction that does not conflict with some transaction in the locked set. In the finalize  phase, if all replies are \code{YES}, the coordinator sends a commit request and otherwise, an abort request. If a replica receives an abort request, it removes the transaction from the locked set.

\noindent
\textbf{Specification.} We add a ghost variable, \code{committed\_transactions}, which keeps track of all transactions that have been committed. Before adding a transaction to this set (in the coordinator's code), we assert that it does not conflict with any previously committed transaction. 

\noindent
\textbf{Applying reduction.} To enable reduction, we abstract the vote request handler to allow it to non-deterministically respond with \code{NO} without modifying the state. This abstraction makes the vote request handler a right mover, while the abort request handler becomes a left mover. The commit request handler is a both mover, since it does not change the state.
To illustrate, consider two successive vote requests handled by the same replica (requests at different replicas commute, as they access disjoint state). If the transactions conflict, one handler might return \code{YES} and the other \code{NO}. Without the abstraction, reordering these handlers isn't sound: if the second executes first, it might respond \code{YES}, which breaks commutativity. However, with the abstraction, the second handler can non-deterministically return \code{NO}, allowing the reordered execution where the first still responds \code{YES}. Similar reasoning applies to other combinations of handlers.

As in previous cases, the abstraction enables a combination of parallel reduction and sequential reduction. The parallel reduction is used to sequentialize the two phases, as exemplified below for the vote requests, and then, the sequentialization enables showing that the whole computation for a transaction can be executed within an atomic section.

\begin{figure}[h]
  \vspace{-5mm}
\begin{minipage}{.55\textwidth}
\begin{lstlisting}
right procedure vote_all(xid: TransactionId, i: ReplicaId) 
  returns (votes: [ReplicaId]Vote) {
  ...
  if (1 <= i) {
    vr = VoteRequest(xid, i);
    par-reduce { 
     (call votes := vote_all(xid, i-1)) 
     par (call out := vote(vr)); } // right 
    }
    votes[i] := out;
}}
\end{lstlisting}
\end{minipage}
\begin{minipage}{.4\textwidth}
\begin{lstlisting}
procedure TPC(xid: TransactionId) {
  ...
  seq-reduce {
    call votes := vote_all(xid, n); // right 
    // locally calculate decision based on votes
    call finalize_all(decision, xid); // left 
  }
  ...
}
\end{lstlisting}
\end{minipage}
\vspace{-5mm}
\caption{$\code{vote\_all}$ and $\code{TPC}$ procedures.}
\vspace{-11mm}
\end{figure}

\section{Related Work}\label{sec:related}
\vspace{-1mm}
We review works concerning the use of commutativity reasoning in proving correctness of concurrent or distributed systems. 

\vspace{.5mm}
\noindent
\textbf{Commutativity reasoning in deductive verification.}
Lipton's reduction theory~\cite{DBLP:journals/cacm/Lipton75} introduced the concept of \emph{movers}
to define a program transformation that creates bounded-size atomic blocks. This work assumes a simple programming language without procedure calls and a fixed number of threads.
QED~\cite{DBLP:conf/popl/ElmasQT09} expanded the scope of Lipton's theory by introducing iterated
application of reduction and abstraction over atomic actions. Also, atomic sections are allowed to contain loops but no procedure calls or dynamic thread creation. 
Civl~\cite{DBLP:conf/cav/HawblitzelPQT15} builds upon the foundation of QED,
adding invariants~\cite{DBLP:journals/cacm/OwickiG76,DBLP:conf/ifip/Jones83},
refinement layers and permission-based reasoning via a linear type system~\cite{DBLP:conf/cav/KraglQ18}, and pending asyncs~\cite{DBLP:conf/concur/KraglQH18,DBLP:conf/pldi/KraglEHMQ20}. Pending asyncs can be viewed as threads restricted to executing a single atomic step and which cannot be joined. They are used to summarize asynchronous procedure calls and define a reduction scheme where asynchronous procedure calls are transformed to synchronous ones~\cite{DBLP:conf/concur/KraglQH18}. This reduction scheme is based on proving that the asynchronously called procedure can be summarized to a left-mover pending async. This idea has been extended to sequentializing an asynchronous program that creates an unbounded number of pending asyncs via an induction principle~\cite{DBLP:conf/pldi/KraglEHMQ20}. In this work, we introduce more flexible reduction schemes that improve scalability. These schemes support greater compositionality by allowing atomic sections to include both sequential and parallel procedure calls. Additionally, they expand the capabilities of reduction by enabling both left- and right-mover-based commutative reorderings.

Anchor~\cite{DBLP:journals/pacmpl/FlanaganF20} applies reduction to a low-level object-oriented language, where mover annotations are assigned to read and write accesses to object fields. It introduces a type system that enables proving the atomicity of entire procedures, which builds on Lipton's reduction. In contrast, our work is set in a more abstract language, supports compositional reduction reasoning about procedures, and accounts for parallel composition. \cite{DBLP:conf/ecoop/FlanaganF24} investigates the integration of reduction with rely-guarantee reasoning, which falls outside the scope of this work.

CSPEC~\cite{DBLP:conf/osdi/ChajedKLZ18} takes an approach similar to Civl but mechanizes all metatheory
within the Rocq theorem prover~\cite{Coq} for flexibility and sound extensibility.
Armada~\cite{DBLP:conf/pldi/LorchCKPQSWZ20} also has flexible and mechanized metatheory whose usefulness is demonstrated by
implementing a variety of program transformations, including those catering to fine-grained concurrency
and weak memory models. IronFleet~\cite{DBLP:conf/sosp/HawblitzelHKLPR15} embeds TLA-style state-machine modeling~\cite{DBLP:books/aw/Lamport2002} into the Dafny verifier~\cite{DBLP:conf/lpar/Leino10} to refine high-level distributed systems specifications into low-level executable implementations. Their proofs embed reduction reasoning into Dafny in a rather ad-hoc manner.

Movers have also been used to define an equivalence-preserving transformation that eliminates buffers in message-passing programs~\cite{DBLP:journals/pacmpl/BakstGKJ17,DBLP:journals/pacmpl/GleissenthallKB19}.
These works define a restricted class of programs and prove that reasoning about the set of \emph{rendezvous} executions of these programs, where messages are delivered instantaneously, is complete, i.e., any other execution is equivalent to a rendezvous execution, up to reordering of mover actions. For instance, \cite{DBLP:journals/pacmpl/GleissenthallKB19} introduces some number of heuristics which are based on syntax in order to reduce a given program. Those heuristics do not apply to our case studies, and it is hard to imagine an extension where they would become applicable. For instance, reduction is sometimes enabled by abstracting actions (message handlers) and this cannot be handled via syntactical arguments.
\edit{Two-phase commit (2PC) is a canonical benchmark in this line of work and has been verified many times in a variety of systems, including automated ones\cite{10.1145/2914770.2837650,DBLP:journals/pacmpl/GleissenthallKB19}.  
In our 2PC, replicas use nontrivial logic to determine their votes, which is not the case for the versions used in these systems. In those previous works, replicas vote \emph{Yes} or \emph{No} nondeterministically, which significantly simplifies the correctness argument: all message handlers are left movers without requiring any abstraction~\cite{DBLP:conf/concur/KraglQH18}.  
In contrast, in our version of 2PC, some message handlers are right movers and some are left movers, after devising appropriate abstractions.}

\vspace{.5mm}
\noindent
\textbf{Commutativity reasoning in algorithmic verification.}
In the context of algorithmic verification, commutativity reasoning manifests in the so-called partial-order reduction techniques~\cite{DBLP:books/sp/Godefroid96,DBLP:conf/popl/FlanaganG05,DBLP:conf/popl/AbdullaAJS14,DBLP:journals/pacmpl/Kokologiannakis22} which mostly concern finite-state systems or executions of bounded length.

In the context of automated proof synthesis for infinite-state programs, most existing work focuses on programs with a bounded number of threads~\cite{DBLP:conf/hvc/ChuJ14,DBLP:conf/pldi/FarzanKP22,DBLP:journals/pacmpl/FarzanKP23}. The work in~\cite{DBLP:journals/pacmpl/FarzanKP24} proposes an instrumentation scheme for parameterized programs, where an unbounded number of threads execute the same code. This scheme enables the representation of sound reductions in such settings. Additionally, they formalize a notion of reduction usefulness, suggesting that a suitable reduction can lead to proofs requiring fewer or simpler ghost variables.

\begin{credits}
\subsubsection{\ackname} This work is partially partially supported by the French National Research Agency (project SCEPROOF).
\end{credits}

\newpage
\bibliography{dblp}

\newpage
\appendix
\section{Proofs for Section 6 (Reduction for RedPL Programs)}\label{appendix:main}


\subsection{Parallel reduction}
\label{sec:proofsection}
In this section we provide the first part of the proof of \autoref{thm:maintheorem}.

We show that all occurrences of $\stmtfont{par \mhyphen reduce}$ can be sequentialized, i.e., $\sourceprog$ refines $\interprog$. We prove that the two properties~\condition{P1} and~\condition{P2} that define refinement are satisfied.


\paragraph*{\bf{Showing refinement property~\condition{P1}}}
Given a failing execution $\pi$ of $\sourceprog$ we rewrite it to $\pi'$, a failing execution in $\interprog$, such that, in the rewritten execution, every instance of $\parreduce{\stmt_1}{\stmt_2}$ satisfies the following two requirements:
\begin{enumerate}
    \item Ordering: all transitions of $\stmt_1$ must come before all transitions of $\stmt_2$.
    \item Completion: if there is a transition of $\stmt_2$, then $\stmt_1$ must have executed completely.
\end{enumerate}


The idea is to first satisfy the two requirements for instances which don't have any further par-reduce statements nested inside them.  
After that, we proceed to satisfy instances where the nested instances have already been satisfied, and so on until all instances satisfy both requirements.

Given an execution, we can identify an instance of a par-reduce applicaton which does not have any further nested applications as follows: in the resulting configuration of a par-reduce enter transition, place a pointer on the internal node $(\EmptyNode^*)$ that was introduced. 
If, in the remainder of the execution, there is no descendant of this node $(\EmptyNode^*)$ where a par-reduce enter transition is fired, then this instance contains no further nested par-reduce applications.

Let $\parreduce{\stmt_1}{\stmt_2}$ be such an instance. For this to be a well typed statement, the following condition related to mover type: $\movertype(\stmt_1) \sqsubseteq \leftmover \vee (\movertype(\stmt_2) \sqsubseteq ~\rightmover \land \neg \mayfail(\stmt_2))$ needs to hold.
First, let's consider the case when $\movertype(\stmt_1) \sqsubseteq \leftmover$, we call this case left-reduce.

\vspace{-2mm}
\paragraph*{\bf{Satisfying ordering requirement (left-reduce)}}
In addition to the ordering requirement, in the case of left-reduce, we also need transitions of $\stmt_1$ to be interference-free (occur consecutively) to ensure that our completion process (explained later) terminates. Therefore, in this case while ordering we also remove interference within transitions of $\stmt_1$.

In order to satisfy the ordering requirement, we start from the leftmost transition of $\stmt_1$ that is not in order and attempt to move it stepwise left (possibly replacing it with a different transition in the process) until it is in the correct position. 
We continue until all transitions of $\stmt_1$ are in order, each time picking the next leftmost transition of $\stmt_1$ that is not yet in order and positioning it immediately after the previously ordered transition.

In the process of moving stepwise to the left, the only interesting case arises when two action transitions are adjacent, otherwise, swapping is trivial because the transitions being swapped do not affect each other. 
Therefore, we only look at this scenario.
Let $L = ( \blank , \blank, \gate_L, \blank, \MoverVar_L, \blank)$ be the action whose transition ($t_L$) we are trying to move to the left, and let $X$ be the action whose transition ($t_X$) immediately precedes $t_L$. 
Let $\ell_L= \inputbinding(t_L), \ell_X = \inputbinding(t_X)$ be their input bindings. Let $\gate_L$ and $\gate_X$ be their respective gates. 

Note that since $\movertype(\stmt_1) \sqsubseteq \leftmover$, action $L$ must have left-mover properties i.e. $\MoverVar_L \sqsubseteq \leftmover$.

Our goal is to reorder the execution while preserving the failure outcome from the initial state; preservation of intermediate states is not required, and the reordered execution may be shorter.
~\\
\textbf{case 1}: $t_L$ is a failure transition: 
\[
(\globalStore_1, \ThreadPool_1) \xrightarrow{t_X} (\globalStore_2, \ThreadPool_2) \xrightarrow{t_L} \fail
\]
We know from $\preservesSuccess(X, L)$ that whenever gate of $X$ and gate of $L$ hold in a state, then any transition of $X$ leads to a state where gate of $L$ holds. Using the contrapositive of this condition, we get: if from $(\globalStore_1, \ThreadPool_1)$ we have a transition of $X$ to a state $(\globalStore_2, \ThreadPool_2)$ where the gate of $L$ does not hold, then either gate of $X$  or gate of $L$ does not hold in $(\globalStore_1, \ThreadPool_1)$. Since, in our case $\globalStore_1\cc\ell_X \in \gate_X$, then it must be that  $\globalStore_1\cc\ell_L \notin \gate_L$. Therefore, we can get a failure of $L$ from $(\globalStore_1, \ThreadPool_1)$ and $t_X$ can be eliminated from the execution. 
~\\
\textbf{case 2}: $t_L$ is a non-failure transition: 
\[
(\globalStore_1, \ThreadPool_1) \xrightarrow{t_X} (\globalStore_2, \ThreadPool_2) \xrightarrow{t_L} (\globalStore_3, \ThreadPool_3)
\]
\begin{itemize}
    \item[] \textbf{case 2.1}:  $(\globalStore_1, \ell_X, \ell_L) \in wlp(X, \gate_L)$ and $(\globalStore_1, \ell_L, \ell_X) \in wlp(L, \gate_X)$.
    \\There exists a way to execute $L$ followed by $X$ and reach the state $(\globalStore_3, \ThreadPool_3)$ by $\commutes(X, L)$~\condition{L3} property of left-movers, thereby preserving the original failure in $\pi$.

    \item[] \textbf{case 2.2}: $(\globalStore_1, \ell_X, \ell_L) \notin wlp(X, \gate_L)$. 
    \\We already have $\globalStore_1\cc\ell_X \in \gate_X$, which means there is a way to run $X$ from $(\globalStore_1, \ThreadPool_1)$ and get a failing transition of $L$. Therefore, we have a transition of $X$ followed by a failure transition of $L$, which is the same as case 1.

    \item[] \textbf{case 2.3}: $(\globalStore_1, \ell_L, \ell_X) \notin wlp(L, \gate_X)$. 
    \\If $\globalStore_1\cc\ell_L \notin \gate_L$, then we execute L from $(\globalStore_1, \ThreadPool_1)$ and get a failure, eliminating $t_X$, otherwise if $\globalStore_1\cc\ell_L \in \gate_L$ then there is a way to run $L$ from $(\globalStore_1, \ThreadPool_1)$ and get a failing transition of $X$. 

\end{itemize}

Let $\pi_l$ be the resulting execution where all transitions of $\stmt_1$ are ordered and interference-free.

\vspace{-2mm}
\paragraph*{\bf{Satisfying completion requirement (left-reduce)}}
Suppose that in $\pi_l$, $\stmt_1$ did not execute completely, and there is at least one transition of $\stmt_2$. In this case, we need to complete $\stmt_1$. Let $\seq(s) = s[\stmt';\stmt'' \;/\; \parreduce{\stmt'}{\stmt''}]$. To complete $\stmt_1$ we choose a scheduling order that matches $seq(\stmt_1)$. The unexecuted part of $\stmt_1$ may contain par-reduce statements, but none have started (otherwise we would have nested par-reduce). Thus, the executed transitions of $\stmt_1$ in $\pi_l$ match a prefix of $seq(\stmt_1)$, and we continue by extending the execution of $\stmt_1$ such that its transitions always match a prefix of $\seq(\stmt_1)$.

Let the last state of the execution before the failure be $(\globalStore, \ThreadPool)$ and let $t_X$ be the failure transition of action $X$ from this state. From $(\globalStore, \ThreadPool)$,  we allow $\stmt_1$ to execute and take a transition that is consistent with the scheduling order of $\seq(\stmt_1)$. 

If this is a non-action transition, it can be trivially introduced while preserving the failure. Therefore, we assume it is an action transition $t_L$ corresponding to action $L$. When we introduce it, one of the two following cases may happen:

\textbf{case 1}: $t_L$ results in a failure. 

Then we have preserved the failure. 

\textbf{case 2}: $t_L$ does not fail. 

By the $\preservesFailure(L, X)$~\condition{L2} property of left-mover we have that if $X$ fails from a state that $L$ does not, then there is a way to execute $L$ and reach a failure of $X$. We use this property to execute $L$ from $(\globalStore, \ThreadPool)$ and reach a state from where $X$ fails. 

Having introduced $t_L$ while preserving the failure, we can now order it to be in its correct position in the same way as before. 
We continue until $\stmt_1$ has executed completely or we have eliminated all transitions of $\stmt_2$ in the ordering process.

\vspace{-2mm}
\paragraph*{\bf{Termination of completion}} Here is the reason why this process of completing $\stmt_1$ terminates. 
The only possible source of non-termination inside $\stmt_1$ comes from procedure calls. 
Let's say there is a call to some procedure $Q$ inside $\stmt_1$. Since $\stmt_1 \sqsubseteq \leftmover$, it follows that $\movertype(Q) \sqsubseteq \leftmover$.
From the assumption that $\interprog$ is terminating, we know that $\interprog(Q)$ terminates when run in isolation (i.e., without interference) starting from an arbitrary store with input bindings. Given that $\seq(\sourceprog(Q)) = \interprog(Q)$, the execution of the procedure call $Q$ while completing $\stmt_1$ also terminates. 

Thus, the resulting execution satisfies the ordering and completion requirements for this instance.

Now we consider the case when $\movertype(\stmt_2) \sqsubseteq \rightmover$. We call this case right-reduce. 

\vspace{-2mm}
\paragraph*{\bf{Satisfying ordering requirement (right-reduce)}}
In order to satisfy the ordering requirement, our strategy is to start with the rightmost transition of $\stmt_2$ that is out of order and move it stepwise to the right until it is in the correct position. 
We then continue by selecting the next rightmost out-of-order transition, repeating this process until all transitions are correctly ordered.

In the process of moving stepwise to the right, again the only interesting case arises when two action transitions are adjacent. 
Let $R = ( \blank , \blank, \gate_R, \blank, \MoverVar_R, \blank)$ be the action whose transition ($t_R$) we are trying to move to the right, and let $X$ be the action whose transition ($t_X$) immediately follows $t_R$.
Let $\ell_R= \inputbinding(t_R), \ell_X = \inputbinding(t_X)$ be their input bindings and $\gate_R$ and $\gate_X$ be their gates. 

Since $\movertype(\stmt_2) \sqsubseteq \rightmover$, action $R$ must have right-mover properties i.e., $\MoverVar_R \sqsubseteq \rightmover$.
~\\
\textbf{case 1:} $t_X$ is a failure transition 
\[
(\globalStore_1, \ThreadPool_1) \xrightarrow{t_R} (\globalStore_2, \ThreadPool_2) \xrightarrow{t_X} \fail
\]
In this case, using the $\preservesSuccess(R, X)$ $\condition{R1}$ property of right movers in the contrapositive form (similar to how we used it in the case 1 of left-reduce) we can cause a failure of X from $(\globalStore_1, \ThreadPool_1)$, thereby eliminating the transition of action R. 
~\\
\textbf{case 2:} $t_X$ is a non-failure transition
\[
(\globalStore_1, \ThreadPool_1) \xrightarrow{t_R} (\globalStore_2, \ThreadPool_2) \xrightarrow{t_X} (\globalStore_3, \ThreadPool_3)
\]
\begin{itemize}
    \item[] \textbf{case 2.1}: $(\globalStore_1, \ell_X, \ell_R) \in wlp(X, \gate_R)$ and $(\globalStore_1, \ell_R, \ell_X) \in wlp(R, \gate_X)$.
    \\Then, there is a way to execute $X$ followed by $R$ and reach the state $(\globalStore_3, \ThreadPool_3)$ using $\commutes(R,X)$ \condition{R2}, thereby preserving the original failure in $\pi$.

    \item[] \textbf{case 2.2}: $(\globalStore_1, \ell_X, \ell_R) \notin wlp(X, \gate_R)$. 
    \\From $\neg \mayfail(\stmt_2)$ we know that action $R$ cannot fail, so we have $wlp(X, \gate_R) = \gate_X$. 
    It follows that $\globalStore_1\cc\ell_X \notin \gate_X$, so executing X from $(\globalStore_1, \ThreadPool_1)$ results in a failure of $X$, thereby eliminating the transition of $R$.

    \item[] \textbf{case 2.3}: $(\globalStore_1, \ell_R, \ell_X) \notin wlp(R, \gate_X)$. 
   \\From $\neg \mayfail(\stmt_2)$ we have that every state satisfies $\gate_R$, so $(\globalStore_1 \cc \ell_R) \in \gate_R$. Therefore, there must be a way to execute $R$ from $(\globalStore_1, \ThreadPool_1)$ and generate a failure of $X$. 
   Now, we fall into case~1.
\end{itemize}

Let $\pi_r$ be the resulting execution where all transitions of $\stmt_1$ are ordered.

\vspace{-2mm}
\paragraph*{\bf{Satisfying completion requirement (right-reduce)}}
Suppose in $\pi_r$, $\stmt_1$ is incomplete and there is at least one transition of $\stmt_2$. 
Since $\stmt_1$ may be a non-mover, we cannot complete it by extending the execution in a way that preserves the failure. 
Therefore, our strategy is to eliminate all transitions of $\stmt_2$ from the execution to satisfy the completion requirement. 
Note that if the failure transition belonged to $\stmt_2$, then we cannot eliminate transitions of $\stmt_2$. 
This is the reason we have the requirement in type checking that $\neg \mayfail(\stmt_2)$.
Now we can assume that the failure transition does not belong to $\stmt_2$. 
Starting from the last transition of $\stmt_2$ we move it stepwise right until we have eliminated it. 
This is achieved by applying the ordering process, as before, with the correct position set immediately after the failure transition. 
We continue until all transitions of $\stmt_2$ have been eliminated.

Now we have an execution in which the ordering and completion requirement of a single instance of par-reduce application (without any further nested par-reduce applications) has been satisfied. 
We can now proceed to adjust other such instances similarly, including the ones where the nested instances have already been processed. 
Once an instance has been adjusted to satisfy the ordering and completion requirements, these properties continue to hold despite the adjustment of any other instance. 
Here is why that is the case:
\begin{enumerate}
    \item The ordering requirement cannot be violated because, while fixing one particular instance, we never swap two transitions that both belong to another instance. 
    Therefore, the relative order among transitions of other instances remains unchanged.
    \item Regarding completion, the only concern may be that, while fixing another instance, the execution might be shortened by introducing a failure. As a result, we might lose a suffix of some already fixed instance. 
    However, this does not violate the completion requirement,  because the completion requirement holds for all prefixes of instances that satisfy ordering and completion. 
\end{enumerate}
As a result, when all instances have been processed, all instances satisfy both requirements.

\vspace{-2mm}
\paragraph*{\bf{Showing refinement property~\condition{P2}}}
Given an execution $\pi$ of $\sourceprog$ from an initial configuration $(\globalStore, \ThreadPool)$ that ends in a final configuration $(\globalStore', \emptyset)$, we aim to show that there exists an execution $\pi'$ in $\interprog$ that starts from $(\globalStore, \ThreadPool)$ and ends in the same final configuration or fails.
\begin{align*}
    (\globalStore, \ThreadPool) \stepp[\prog]\transitive (\globalStore', \emptyset) \quad \implies \quad \big((\globalStore, \ThreadPool) \stepp[\prog']\transitive  (\globalStore', \emptyset) \quad \lor \quad (\globalStore, \ThreadPool) \stepp[\prog']\transitive  \fail \big)
\end{align*}

We can establish this by applying the same rewrite process to enforce the required ordering on $\pi$. Completion is needed only if the reordering introduces failures. If no failures are introduced---that is, we remain in case~2.1 of both left- and right-reduce---then the final state $(\globalStore', \emptyset)$ is preserved by the $\commutes$ property of left and right movers. Otherwise, the execution ends in a failure.


\subsection{Sequential reduction}
Now we show the second part of the proof of \autoref{thm:maintheorem}.

We show that all code blocks inside $\stmtfont{seq \mhyphen reduce}$ can be made atomic, i.e., $\interprog$ refines $\reducedprog$.

\paragraph*{\bf{Showing refinement property~\condition{P1}}}
Given a failing execution $\pi$ of $\interprog$, we rewrite it to a failing execution $\pi'$ of $\reducedprog$ such that, in the rewritten execution, for each sequential reduce application of the form $\seqreduce{\stmt}$, the following two requirements hold:
\begin{enumerate}
    \item Non-interference: There is no interleaving of other transitions with those of statement~$\stmt$.
    \item Completion: If there is at least one transition of statement $\stmt$ in the rewritten execution, then statement $\stmt$ must either execute completely or end in a failure.
\end{enumerate} 

In order to satisfy non-interference and completion requirement of each instance, our strategy is to satisfy these two requirements for every outermost instance (i.e., an instance that is not nested within any other sequential reduce application). 
Once the outermost instance satisfies both requirements, all nested instances do so as well, since their transitions are treated as part of the outermost instance.

We can identify an outermost instance of the sequential reduction application as follows: in the resulting configuration of a seq-reduce enter transition, we mark the leaf node ($\EmptyLeaf^*$) where $\inseqreduceempty$ was introduced. 
If there is no $\inseqreduceempty$ in this node and any node on the path from $\EmptyLeaf^*$ to the root, then we have identified an outermost application.

\vspace{-2mm}
\paragraph*{\bf{Satisfying non-interference requirement}} We devise a swapping strategy on transitions by partitioning them into blocks and performing swaps based on the block type, such that when no further swaps are possible, the resulting execution satisfies the non-interference requirement.

\vspace{-2mm}
\paragraph*{\bf{Blocks}} We consider all transitions belonging to a statement $\stmt$ of a particular outermost instance to be part of a block. 

From the fact that $\seqreduce{\stmt}$ is well-typed we get that $\movertype(\stmt) \sqsubseteq \nonmover$, which implies that the mover types of actions in any execution path of statement $\stmt$ will match the pattern $\rightmover^*\nonmover?\leftmover^*$. 


\paragraph*{\bf{Commit transition of a block}} We define a commit transition of a block, when it exists, as: 
\begin{itemize}
    \item If the block contains a non-right-mover action transition, the commit transition is the first such transition within the block. 
    \item Otherwise, if the statement $\stmt$ executes completely or ends in a failure, the commit transition is the last transition of $\stmt$ or the failure transition, respectively, within the block.
\end{itemize}
All other transitions where seq-reduce is not applied are marked as commit transitions and each such transition is considered to be in a block of its own. 

The only case where a block does not have a commit transition is when it contains a statement from an application of a seq-reduce instance that (1) contains no non-right-mover action transition, (2) did not execute completely, and (3) did not end in a failure. 
A block that has a commit transition is called a \emph{committed} block; otherwise, it is called an \emph{uncommitted} block. 

\vspace{-2mm}
\paragraph*{\bf{Mover labeling for transitions in a block}} In every committed block, we assign the commit transition the non-mover label ($\nonmover$). 
All transitions before the commit transition are assigned the right mover label ($\rightmover$) and all transitions after the commit transition are assigned the left-mover label ($\leftmover$). 
All transitions in an uncommitted block are assigned the right-mover label. 
Given a transtion $t$ let $\labeling(t)$ denote the label assigned to it. 
Note that for any action transition $t_A$ of an action $A$, the mover type of $A$ is at least as precise as the label of $t_A$ (i.e., $\movertype(A) \sqsubseteq \labeling(t_A)$). 
For instance, if the mover type of an action was $\rightmover$, we would have labeled the transition as $\rightmover$ or $\nonmover$, but not $\leftmover$. 
After labeling, we find that each committed block has a labeling that matches the pattern $\rightmover^*\nonmover\leftmover^*$, with the single $\nonmover$ representing the commit transition, while every uncommitted block labeling matches the pattern $\rightmover^+$.

We define two predicates on transitions: $\committed(t_X)$ holds if transition $t_X$ belongs to a committed block, and $\uncommitted(t_X)$ holds if transition $t_X$ belongs to an uncommitted block. 
We can obtain the block of a transition $t_X$ using $\block(t_X)$ and its corresponding commit transition using $\commit{t_X}$.

\vspace{-2mm}
\paragraph*{\bf{Swapping Condition}}

Let $<_E$ denote the execution order over transitions. We define an arbitrary total order $<_O$ over the uncommitted blocks that remains fixed.

Two transitions $t_A$ and $t_B$ that occur consecutively in the execution are \emph{swappable} if and only if one of the following conditions holds:
\begin{enumerate}
    \item $\uncommitted(t_A) \land \committed(t_B)$
    \item $\committed(t_A) \land \committed(t_B) \land \commit{t_B} <_E \commit{t_A}$
    \item $\uncommitted(t_A) \land \uncommitted(t_B) \land \block(t_B) <_O \block(t_A)$. 
\end{enumerate}

\begin{lemma}
    If $t_A$ and $t_B$ are swappable, then either $\labeling(t_A) = \rightmover$ or $\labeling(t_B) =~\leftmover$.
\end{lemma}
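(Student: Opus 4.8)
The plan is to do a case analysis on the three clauses defining swappability, and in each case exploit only the mover-labeling rule for blocks. Recall that in a committed block the commit transition carries the label $\nonmover$, every transition strictly before it carries $\rightmover$, and every transition strictly after it carries $\leftmover$; in an uncommitted block every transition carries $\rightmover$. I will use $<_E$ (and $\le_E$) for the execution order throughout.

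The easy cases are clauses~(1) and~(3). In both of them the hypothesis includes $\uncommitted(t_A)$, so $t_A$ lies in an uncommitted block, and by the labeling rule for uncommitted blocks $\labeling(t_A) = \rightmover$. Hence the conclusion holds immediately, with no use of the extra side conditions in those clauses.

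The remaining case is clause~(2): $\committed(t_A) \land \committed(t_B) \land \commit{t_B} <_E \commit{t_A}$, with $t_A$ and $t_B$ consecutive and $t_A <_E t_B$. I would argue by contradiction: assume $\labeling(t_A) \neq \rightmover$ and $\labeling(t_B) \neq \leftmover$. Since $\block(t_A)$ is committed and $\labeling(t_A)\in\{\nonmover,\leftmover\}$, the transition $t_A$ is either the commit transition of its block or lies after it, so $\commit{t_A} \le_E t_A$. Symmetrically, since $\block(t_B)$ is committed and $\labeling(t_B)\in\{\rightmover,\nonmover\}$, the transition $t_B$ is at or before its block's commit transition, so $t_B \le_E \commit{t_B}$. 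Chaining these with $t_A <_E t_B$ and the swapping hypothesis $\commit{t_B} <_E \commit{t_A}$ yields
\[
\commit{t_A} \;\le_E\; t_A \;<_E\; t_B \;\le_E\; \commit{t_B} \;<_E\; \commit{t_A},
\]
i.e. $\commit{t_A} <_E \commit{t_A}$, a contradiction. Hence $\labeling(t_A) = \rightmover$ or $\labeling(t_B) = \leftmover$.

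There is no genuine obstacle here; the statement is essentially a bookkeeping consequence of how the labels are laid out relative to the commit transitions. The only point demanding a little care is keeping track of strict versus non-strict execution-order inequalities — in particular noting that the $\nonmover$ label sits exactly on the commit transition, so "not labeled $\rightmover$" forces $\commit{t_A}\le_E t_A$ rather than a strict inequality, and dually for $t_B$ — and observing that clauses~(1) and~(3) never even need their auxiliary conditions. This lemma is what will later guarantee that every swap performed by the reordering strategy is between a right mover on the left and/or a left mover on the right, hence sound by the mover properties of the action map.
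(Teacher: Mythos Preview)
Your proposal is correct and follows essentially the same approach as the paper. The only cosmetic difference is that for clause~(2) the paper enumerates the four ``bad'' label combinations $(\leftmover,\rightmover)$, $(\nonmover,\rightmover)$, $(\leftmover,\nonmover)$, $(\nonmover,\nonmover)$ one by one, whereas you collapse them into a single chain via the non-strict inequalities $\commit{t_A}\le_E t_A$ and $t_B\le_E\commit{t_B}$; your version is a bit tidier but the content is identical.
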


\begin{proof}
If they satisfy conditions (1) or (3), then $t_A$ must have a right-mover label, since all uncommitted transitions have right-movers labels.
~\\
If they satisfy condition (2), then the only way $\commit{t_B} <_E \commit{t_A}$ conjunct of condition (2) can hold is if either $t_A$ has a right-mover label or $t_B$ has a left-mover label. 
    In any other case we violate condition (2). 

   \begin{enumerate}
    \item If we choose $\labeling(t_A)=\leftmover$ and $\labeling(t_B)=\rightmover$, then
    $\commit{t_A} <_E t_A$ and $t_B <_E \commit{t_B}$ by the definition of commit transitions. 
    Since every block has the form $\rightmover^{*}\nonmover\leftmover^{*}$, we obtain
    \[
        \commit{t_A} <_E t_A <_E t_B <_E \commit{t_B},
    \]
    which violates condition~(2).

    \item If we choose $\labeling(t_A)=\nonmover$ and $\labeling(t_B)=\rightmover$, then
    \[
        t_A=\commit{t_A} <_E t_B <_E \commit{t_B},
    \]
    again violating condition~(2).

    \item If we choose $\labeling(t_A)=\leftmover$ and $\labeling(t_B)=\nonmover$, then
    \[
        \commit{t_A} <_E t_A <_E \commit{t_B}=t_B,
    \]
    which also violates condition~(2).

    \item Finally, if we choose $\labeling(t_A)=\nonmover$ and $\labeling(t_B)=\nonmover$, then
    \[
        t_A=\commit{t_A} <_E \commit{t_B}=t_B,
    \]
    again violating condition~(2).
    \end{enumerate}
\hfill $\square$
\end{proof}

\begin{lemma}
    \label{lemma:seq-failure-preservation}
    Swapping preserves failures and it terminates.
\end{lemma}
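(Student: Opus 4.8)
The plan is to handle the two claims separately, reusing the commutativity case analysis already carried out for parallel reduction.

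For \emph{preservation of failures}, I would argue one swap at a time and conclude by induction on the number of swaps. Fix a swappable adjacent pair $t_A, t_B$ with $t_A <_E t_B$ in a failing execution. First I would observe that a swappable pair always consists of transitions of two distinct threads, because consecutive transitions of a single thread are never swappable: within a single block a pair fails all three conditions, and once a thread leaves a block that block has run to completion, hence is committed with its commit transition at or after $t_A$, so condition~2 fails. If either of $t_A, t_B$ is a local (non-action) transition, the swap is immediate: local transitions never touch $\globalStore$ and the two threads have disjoint local stores, so the transitions are independent and the downstream failure is untouched. The interesting case is two adjacent action transitions. Here the preceding lemma gives $\labeling(t_A) = \rightmover$ or $\labeling(t_B) = \leftmover$, and since $\movertype(A) \sqsubseteq \labeling(t_A)$ for every action transition, this means the action of $t_A$ is a right mover or the action of $t_B$ is a left mover. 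In the first case I push $t_A$ rightward past $t_B$ using $\condition{R1}$ and $\condition{R2}$; in the second I push $t_B$ leftward past $t_A$ using $\condition{L1}$ and $\condition{L3}$. Either way I replay the case analysis from the ordering steps of parallel reduction: if the trailing transition is itself the failing step, the mover-success condition lets the failure surface one step earlier; otherwise, either both relevant $\wlp$ conditions hold and $\commutes$ reorders the pair while preserving the post-state (hence the rest of the execution, including the eventual failure), or one of the $\wlp$ conditions fails and a failure of one of the two actions surfaces earlier, truncating the execution. The only departure from parallel reduction is the absence of a $\neg\mayfail$ side condition, so the right/left mover may itself fail; but in precisely the sub-cases where that matters the failing mover merely causes a failure to surface earlier, which still yields a failing execution.

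For \emph{termination}, I would exhibit a lexicographic ranking function that strictly decreases at every swap. Fix the canonical order on the transitions of the current execution: all committed transitions before all uncommitted ones, committed transitions ordered by the execution position of their commit transition, uncommitted transitions ordered by the fixed order $<_O$ on uncommitted blocks, and execution order within each block. This is well defined because a swap never moves a transition between blocks, and a non-truncating swap — the $\commutes$ case — involves two transitions in distinct blocks and so leaves the internal sequence of each block, and therefore every block's commit transition and committed/uncommitted status, unchanged. Let $\mathit{inv}(\pi)$ count the pairs of transitions of $\pi$ that are out of canonical order, and take the measure $(|\pi|, \mathit{inv}(\pi))$ ordered lexicographically. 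A truncating swap removes at least one transition, so $|\pi|$ drops. A non-truncating swap keeps $|\pi|$ fixed and only flips the relative execution order of the single pair $\{t_A, t_B\}$; that pair is out of canonical order before the swap — this is exactly what each of the three swappability conditions encodes — and in canonical order after, so $\mathit{inv}$ drops by exactly one while no other pair's status changes. Hence the measure strictly decreases, and since it ranges over $\Nat \times \Nat$ the process terminates.

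I expect the main obstacle to be keeping the termination argument robust under truncation. Truncating swaps can change which transitions count as commit transitions — for instance by deleting the trailing transition of a block that had committed only by running to completion — and this changes the canonical order and hence $\mathit{inv}$. This is why $|\pi|$ must be the primary component of the measure (it dominates every truncating swap outright), and why one must check carefully that every \emph{non}-truncating swap leaves the block structure, the commit transitions, and therefore the canonical order completely fixed, so that its secondary-component decrease is measured against an unchanging target. A secondary point that needs care is the lifting of the store-level $\commutes$, $\preservesSuccess$ and $\preservesFailure$ conditions to configuration-level swaps, which is exactly what the distinct-threads observation above enables.
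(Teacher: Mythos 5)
Your proof is correct and follows essentially the same route as the paper: the failure-preservation part replays the same case analysis (success preservation in contrapositive when the trailing transition fails, commutativity when both $\wlp$ conditions hold, and surfacing an earlier failure otherwise, including the correct observation that, unlike parallel reduction, a failing mover is harmless here because it only makes the failure occur sooner), and your lexicographic measure $(|\pi|, \mathit{inv}(\pi))$ is an explicit formalization of the paper's termination argument, which restarts on strictly shorter executions whenever a new failure is introduced and then notes that each pair of transitions is swapped at most once. The additional details you supply---that swappable pairs always lie in distinct blocks and hence distinct threads, and that non-truncating swaps leave block membership, commit transitions, labels, and the relative order of commit transitions unchanged---are accurate and make the termination step tighter than the paper's sketch.
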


\begin{proof}
Since swapping two transitions involving a local transition trivially preserves failures, we only need to show that swapping two action call transitions, $t_A$ and $t_B$, preserves failures. 
Let $A = (\blank , \blank, \gate_A, \blank, \MoverVar_A, \blank)$ and $B = ( \blank , \blank, \gate_B, \blank, \MoverVar_B, \blank)$. 
From our earlier observation---that if $t_A$ and $t_B$ are swappable, then either $\labeling(t_A) = \rightmover$ or $\labeling(t_B) = \leftmover$---we can conclude that either $\MoverVar_A \sqsubseteq \rightmover$ or $\MoverVar_B \sqsubseteq \leftmover$ holds.
Let $\ell_A= \inputbinding(t_A), \ell_B = \inputbinding(t_B)$ be their input bindings. 
~\\
\textbf{case 1}: $t_B$ is a failed transition
\[
(\globalStore_1, \ThreadPool_1) \xrightarrow{t_A} (\globalStore_2, \ThreadPool_2) \xrightarrow{t_B} \fail
\]
\begin{itemize}
    \item[] \textbf{case 1.1}: $\MoverVar_A \sqsubseteq \rightmover$ 
    \\We know from $\preservesSuccess(A, B)$ property~\condition{R1} of right movers that whenever gates of $A$ and $B$ hold in a state, every transition of $A$ from that state leads to a state where the gate of $B$ holds. 
    Using the contrapositive of this condition, we get the following: if, from $(\globalStore_1, \ThreadPool_1)$ there is a transition of $A$ to a state $(\globalStore_2, \ThreadPool_2)$ where the gate of $B$ does not hold, then either gate of $A$ does not hold in $(\globalStore_1, \ThreadPool_1)$ or the gate of $B$ does not hold in $(\globalStore_1, \ThreadPool_1)$. 
    Since, in our case, gate of $A$ holds in $(\globalStore_1, \ThreadPool_1)$, it must be that gate of $B$ does not hold in $(\globalStore_1, \ThreadPool_1)$. Therefore, we can eliminate $A$, and get a failure of $B$ from $(\globalStore_1, \ThreadPool_1)$.
    \item[] \textbf{case 1.2}: $\MoverVar_B \sqsubseteq \leftmover$
    \\In this case, we again use the $\preservesSuccess(A, B)$~\condition {L1} of left movers in contrapositive to eliminate $A$ and get a failure of $B$ (using the same reasoning as in the previous case).
\end{itemize}
~\\
\textbf{case 2}: $t_B$ is not a failed transition
\[(\globalStore_1, \ThreadPool_1) \xrightarrow{t_A} (\globalStore_2, \ThreadPool_2) \xrightarrow{t_B} (\globalStore_3, \ThreadPool_3)
\]
\begin{itemize}
    \item[] \textbf{case 2.1}: $(\globalStore_1, \localStore_A, \localStore_B)  \in wlp(A, \gate_B)$ and $ (\globalStore_1, \localStore_B, \localStore_A) \in wlp (B, \gate_A)$ 
    \\Since we have $\MoverVar_A \sqsubseteq \rightmover$ or $\MoverVar_B \sqsubseteq \leftmover$, we use the $\commutes(A,B)$ \condition{R2} or \condition{L3} to preserve the end state.

    \item[] \textbf{case 2.2}: $(\globalStore_1, \localStore_A, \localStore_B) \notin wlp(A, \gate_B)$ 
    \\We already have $\globalStore_1\cc\localStore_A \in \gate_A$. Therefore, we have a way to execute $A$ from $(\globalStore_1, \ThreadPool_1)$ and get a failure of $B$. 

    \item[] \textbf{case 2.3}: $(\globalStore_1, \localStore_B, \localStore_A) \notin wlp(B, \gate_A)$.
    \\If $\globalStore_1\cc\localStore_B \notin \gate_B$ then we execute $B$ from $(\globalStore_1, \ThreadPool_1)$ and get a failure. Otherwise, if $\globalStore_1\cc\localStore_B \in \gate_B$,
    then there exists a way to execute $B$ from $(\globalStore_1, \ThreadPool_1)$ and get a failure of $A$.

\end{itemize}

\paragraph*{Termination}
In all cases above, if a new failure is introduced, we obtain a strictly shorter failing execution and restart the swapping process on this new execution, since the sets of committed and uncommitted blocks may have changed.

Once the process stops restarting and the sets of committed and uncommitted blocks stabilize, each pair of transitions is swapped at most once. Indeed, after two transitions are swapped, the condition that enabled the swap no longer holds for that pair. \hfill $\square$
\end{proof}

\begin{lemma}
    \label{lemma:no-uncommitted-blocks}
    When no more swaps are enabled, there are no uncommitted blocks.
\end{lemma}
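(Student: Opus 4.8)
The plan is to argue by contradiction: I assume that no swap (among the three swapping conditions) is enabled on the current execution but that some uncommitted block still occurs in it, and I derive an enabled swap. The first step is a preliminary observation to establish up front: the last transition of the execution --- the step into $\fail$ --- always belongs to a committed block. Indeed, if that transition occurs inside the statement $\stmt$ of some outermost $\seqreduce{\stmt}$ instance, then $\stmt$ ``ends in a failure,'' so by the definition of a commit transition the failure transition is the commit transition of that block; and if it occurs outside every $\seqreduceempty$, it forms a committed block of its own. Hence the final transition of the failing execution cannot lie in an uncommitted block. (Here I use that a failing execution is finite and stops at $\fail$.)

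Next I would pick, among the uncommitted blocks occurring in the execution, the one $U$ that is maximal with respect to the fixed total order $<_O$ on uncommitted blocks; this exists because $<_O$ is total and there are finitely many blocks. Let $t_U$ be the last transition of $U$ in execution order $<_E$; it exists because an uncommitted block is nonempty (its labeling matches the pattern $\rightmover^+$). By the preliminary observation, $t_U$ is not the final transition of the execution, so there is a transition $t'$ immediately following $t_U$, and $t' \notin U$ since $t_U$ is the last transition of $U$. I then split on the block of $t'$: if $\committed(t')$ holds, then the consecutive pair $(t_U, t')$ satisfies swapping condition~(1), namely $\uncommitted(t_U) \land \committed(t')$; if $\uncommitted(t')$ holds, then $t'$ lies in an uncommitted block $U' \neq U$, and the $<_O$-maximality of $U$ gives $\block(t') <_O \block(t_U)$, so $(t_U, t')$ satisfies swapping condition~(3). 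In either case a swap is enabled, contradicting the assumption; therefore no uncommitted block can remain.

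I expect the only subtle point to be purely one of bookkeeping: arguing that the transition $t'$ immediately after $t_U$ exists and lies outside $U$, which is exactly where finiteness of the failing execution and the placement of the failure transition inside a committed block are used. Everything else is a direct unfolding of the definitions of the swapping conditions and of $<_O$ as a fixed total order on uncommitted blocks. In particular, no commutativity or mover-type reasoning is needed here --- that was already discharged in Lemma~\ref{lemma:seq-failure-preservation} --- so the argument stays entirely combinatorial.
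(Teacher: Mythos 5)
Your proof is correct and takes essentially the same route as the paper: argue by contradiction that an enabled swap exists at the last transition of an extremal uncommitted block, using the fact that a failure transition always lies in a committed block so a successor transition exists. The only difference is that the paper picks the globally last uncommitted transition, so its successor is automatically committed and only swapping condition~(1) is needed, whereas your $<_O$-maximal choice additionally invokes condition~(3); both variants are sound.
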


\begin{proof}
    First, we show that there are no uncommitted blocks. For the sake of contradiction, assume that there exists a transition from an uncommitted block. 
    Take the last uncommitted transition ($t_X$) in the execution.
    $t_X$ can't be a transition that failed, because a transition that failed belongs to a committed block. 
    This means there exists a transition that follows $t_X$ (call it $t_Y$). 
    $t_Y$ cannot be from an uncommitted block, otherwise $t_X$ would not be the last uncommitted transition. 
    Therefore, $t_Y$ must be committed and in this case a swap is enabled leading to a contradiction.
\hfill $\square$
\end{proof}

Now, we show that there are no blocks with interference.
We can now assume that the execution is made of only committed blocks, because we already showed that uncommitted blocks don't exist. 
We observe that committed blocks either look like $\rightmover^*\nonmover$ or $\rightmover^*\nonmover\leftmover^*$.
If a committed block B has interference, this means that there exists some transition, let's call this an interfering transition, of a different block in the execution that executes among the transitions of this block B. 
If the interfering transition executes before the commit transition of block B, then we say block B has a before-commit inteference. 
In any before-commit interference, you can find a right labeled transition of block B that immediately precedes the interfering transition. 
If the interfering transition executes after the commit transition of block B, then we say block B has an after-commit interference. 
In any after-commit inteference, you can find a left labeled transition that immediately follows the interfering transition. 

Let the order on blocks be defined by the order in which their commit transition occurs in the execution. 

\begin{lemma}
    \label{lemma:no-before-commit}
    When no more swaps are enabled, there is no block with before-commit interference.
\end{lemma}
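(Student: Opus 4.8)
The plan is to extract from the swapping conditions a single monotonicity property on the execution positions of commit transitions, and then read off the contradiction. Since the execution is stable (no swap is enabled) and, by Lemma~\ref{lemma:no-uncommitted-blocks}, contains no uncommitted blocks, every transition lies in a committed block. Hence for any two \emph{consecutive} transitions $t_A <_E t_B$, swapping conditions~(1) and~(3) are vacuous — each requires $\uncommitted(t_A)$ — so non-swappability of $(t_A,t_B)$ is exactly the negation of condition~(2), i.e.\ $\neg(\commit{t_B} <_E \commit{t_A})$. Because $<_E$ is a total order on transitions and commit transitions of distinct blocks are distinct transitions, this yields $\commit{t_A} \le_E \commit{t_B}$, and strictly whenever $\block(t_A)\neq\block(t_B)$.

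Next I would chain this inequality along the whole execution to obtain the one fact the argument needs: for \emph{any} two transitions $t <_E t'$ of the stable execution, $\commit{t} \le_E \commit{t'}$, and strictly whenever $\block(t)\neq\block(t')$ (again since distinct blocks have distinct commit transitions, so "$\le_E$" sharpens to "$<_E$"). One could instead derive just the one direction used below directly from the single consecutive pair consisting of $t_I$ and the right-labeled transition of $B$ immediately preceding it (as noted just before the lemma); both routes give the same thing.

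With monotonicity in hand the contradiction is immediate. Suppose some block $B$ has before-commit interference, witnessed by a transition $t_I$ with $\block(t_I)\neq B$ and $t_b <_E t_I <_E t_c$, where $t_b$ is the first transition of $B$ in the execution and $t_c$ is the commit transition of $B$ (both lie in $B$, so $\commit{t_b}=\commit{t_c}$ is exactly the commit transition of $B$). Applying monotonicity to $(t_b,t_I)$ gives $\commit{t_b} <_E \commit{t_I}$; applying it to $(t_I,t_c)$ gives $\commit{t_I} <_E \commit{t_c}$. Since $\commit{t_b}=\commit{t_c}$, these two strict inequalities are contradictory, so no block has before-commit interference. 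I do not expect any genuine obstacle: the lemma is a short consequence of the swapping conditions together with Lemma~\ref{lemma:no-uncommitted-blocks}; the only care needed is the bookkeeping that $\commit{\cdot}$ on distinct blocks yields distinct transitions and that $t_b,t_c$ genuinely lie in $B$ when monotonicity is invoked.
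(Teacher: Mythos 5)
Your proof is correct, but it takes a genuinely different route from the paper's. The paper argues locally and extremally: it picks the \emph{rightmost} block with before-commit interference, considers the right-labeled transition of that block immediately preceding the interfering transition, and case-splits on the interfering transition's label --- two cases exhibit an enabled swap, and the third pushes a before-commit interference into a block further to the right, contradicting the extremal choice. You instead distill a single global invariant of the stable execution: since by \autoref{lemma:no-uncommitted-blocks} every transition is committed, non-swappability of each consecutive pair reduces to the negation of condition~(2), so commit positions are nondecreasing along $<_E$ and strictly increasing across distinct blocks; any interfering transition $t_I$ lying strictly between a transition of $B$ and $B$'s commit then forces $\commit{t_I}$ to be both after and before $B$'s commit, a contradiction. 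This avoids the extremal choice, the label case analysis, and the $\rightmover^*\nonmover\leftmover^*$ block-shape observation, and the same monotonicity in fact also yields \autoref{lemma:no-after-commit} (indeed, absence of any interference) in one stroke, whereas the paper's version stays closer to the swap mechanics and reuses the case-analysis template of the surrounding lemmas. One small caveat: your aside that the needed inequalities could be obtained from the single consecutive pair formed by $t_I$ and the right-labeled transition of $B$ preceding it is only half true --- that pair yields $\commit{t_b} <_E \commit{t_I}$, but the reverse inequality $\commit{t_I} <_E \commit{t_c}$ still requires the chained monotonicity (no transition of $B$ need immediately follow $t_I$); your main argument, which chains, is fine as written.
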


\begin{proof}
    Let's assume for the sake of contradiction that there is some block with before-commit interference. 
    Pick the rightmost block B (block whose commit transition is the rightmost in the execution) with a before-commit interference.
    
    Let $t_R$ be the right labeled transition of a committed block B that immediately precedes the interfering transition $t_X$. 
    Note that $t_X$ also belongs to a committed block (using \autoref{lemma:no-uncommitted-blocks}).

    \textbf{case 1}: $t_X$ has label $\leftmover$ or $\nonmover$, then a swap is possible among $t_R$ and $t_X$ because $\commit{t_X} <_E \commit{t_R}$, contradicting the assumption that there are no more swaps enabled. 
        
    \textbf{case 2}: $t_X$ has label $\rightmover$ and $\commit{t_X} <_E \commit{t_R}$, \\then again a swap is possible among $t_R$ and $t_X$, contradicting the assumption that there are no more swaps enabled.
        
    \textbf{case 3}: $t_X$ has label $\rightmover$ and $\commit{t_R} <_E \commit{t_X}$, 
    \\implies we have $t_X <_E \commit{t_R} <_E \commit{t_X}$, which means we have a before-commit interference in $\block(t_X)$ where the interfering transition is $\commit{t_R}$ and this contradicts the assumption that block B is the rightmost with before-commit interference. 
\hfill $\square$
\end{proof}

\begin{lemma}
    \label{lemma:no-after-commit}
     When no more swaps are enabled, there is no block with after-commit interference.
\end{lemma}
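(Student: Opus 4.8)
The plan is to mirror the proof of Lemma~\ref{lemma:no-before-commit}, exchanging the roles of $\rightmover$ and $\leftmover$, of ``before'' and ``after'', and of ``rightmost'' and ``leftmost''. I would argue by contradiction: assume some block has after-commit interference and pick the \emph{leftmost} such block $B$, i.e.\ the one whose commit transition $\commit{t_B}$ occurs earliest in the execution under the fixed order on blocks. By the characterization of after-commit interference already established, there is an interfering transition $t_X$ of a block distinct from $B$ together with a left-labeled transition $t_L$ of $B$ that immediately follows $t_X$; since $t_X$ is after-commit and $t_L$ belongs to $B$ we have $\commit{t_B} = \commit{t_L} <_E t_X <_E t_L$. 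By Lemma~\ref{lemma:no-uncommitted-blocks} every block is committed, so $t_X$ has a well-defined commit transition $\commit{t_X}$ and a label in $\{\rightmover,\nonmover,\leftmover\}$.

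Then I would split into three cases on $\labeling(t_X)$, paralleling Cases~1--3 of the before-commit lemma. If $\labeling(t_X)\in\{\rightmover,\nonmover\}$, then $t_X \le_E \commit{t_X}$ by the $\rightmover^*\nonmover\leftmover^*$ shape of its block, hence $\commit{t_L} = \commit{t_B} <_E t_X \le_E \commit{t_X}$; thus the consecutive committed pair $t_X, t_L$ meets clause~(2) of the swapping condition ($\commit{t_L} <_E \commit{t_X}$) and is swappable, contradicting that no swap is enabled. If $\labeling(t_X) = \leftmover$ and $\commit{t_L} <_E \commit{t_X}$, the very same clause applies to $t_X, t_L$ and again yields a swap. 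In the remaining case $\labeling(t_X) = \leftmover$ and $\commit{t_X} <_E \commit{t_L}$: since $t_X$ is left-labeled we have $\commit{t_X} <_E t_X$, and combining with the hypothesis gives $\commit{t_X} <_E \commit{t_L} = \commit{t_B} <_E t_X$, so the transition $\commit{t_B}$ of $B$ lies strictly between the two transitions $\commit{t_X}$ and $t_X$ of $\block(t_X)$ and occurs after the commit of $\block(t_X)$. Hence $\block(t_X)$ itself has after-commit interference, and since $\commit{t_X} <_E \commit{t_B}$ its commit is earlier than that of $B$, contradicting the minimality of $B$. Every case reaches a contradiction, so no block has after-commit interference.

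The step I expect to be the main obstacle, exactly as in the before-commit lemma, is the last case: choosing the right interfering transition to witness after-commit interference in $\block(t_X)$. The symmetric choice is $B$'s own commit transition $\commit{t_B}$ (mirroring the use of $\commit{t_R}$ there), and one has to check carefully that it genuinely sits \emph{among} the transitions of $\block(t_X)$---between $\commit{t_X}$ and the later left-labeled $t_X$---rather than outside the block, which is where the inequalities $\commit{t_X} <_E \commit{t_B} <_E t_X$ are used. A related point to get right is the direction of extremality: $B$ must be chosen as the leftmost (earliest-committing) offending block, since the interference newly exhibited in the last case lies in a block with a strictly earlier commit; this is the precise dual of the before-commit argument, where the new interference has a strictly later commit and $B$ is chosen rightmost. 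The remaining ingredients---the swapping clauses, the failure-preservation lemma (Lemma~\ref{lemma:seq-failure-preservation}) justifying the swaps, and the $\rightmover^*\nonmover\leftmover^*$ block shape---are already in place, and the three cases are exhaustive because $\block(t_X)$ is committed and $\commit{t_X}\ne\commit{t_B}$ since the two blocks are distinct.
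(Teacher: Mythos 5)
Your proof is correct and follows essentially the same route as the paper's: pick the leftmost (earliest-committing) block with after-commit interference, take the left-labeled transition $t_L$ immediately following the interfering $t_X$, and split on $\labeling(t_X)$, using swap clause (2) in the first two cases and pushing the interference (witnessed by $\commit{t_B}$) into the earlier-committing block $\block(t_X)$ in the last case. Your only deviation is an improvement in wording: in the final case you correctly identify the exhibited interference in $\block(t_X)$ as \emph{after}-commit interference, whereas the paper's text calls it before-commit interference (an apparent typo), and your justification of the inequalities $\commit{t_X} <_E \commit{t_B} <_E t_X$ is spelled out more carefully.
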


\begin{proof}
    Assume there is some block with after-commit interference. 
    Pick the leftmost such block B (block whose commit transition is the leftmost in the execution) with an an after-commit interference. 

    Let $t_L$ be the left labeled transition that immediately follows the interfering transition $t_X$. 
    Note that $t_X$ also belongs to a committed block (using \autoref{lemma:no-uncommitted-blocks}).

    \textbf{case 1}: $t_X$ has label $\rightmover$ or $\nonmover$, \\ then a swap is possible because $\commit{t_L} <_E \commit{t_X}$, contradicting the assumption that there are no more swaps enabled.
  
    \textbf{case 2}: $t_X$ has label $\leftmover$ and $\commit{t_L} <_E \commit{t_X}$, \\ then again a swap is possible among $t_L$ and $t_X$ contradicting the assumption that there are no more swaps enabled.

    \textbf{case 3}: $t_X$ has label $\leftmover$ and $\commit{t_X} <_E \commit{t_L}$ \\ 
    implies we have $\commit{t_X} < \commit{t_L} <_E  t_X$, which means we have a before-commit interference in $\block(t_X)$ where the interfering transition is $\commit{t_L}$ and this contradicts the assumption that block B is the leftmost block with before-commit interference.
\hfill $\square$
\end{proof}

\begin{lemma}
    \label{lemma:no-uncommitted-no-interference-appendix}
    When no more swaps are enabled, there are no uncommitted blocks and there are no blocks with interference.
\end{lemma}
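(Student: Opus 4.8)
The plan is to derive this lemma directly by assembling the three preceding lemmas, with essentially no new reasoning. The statement splits into two parts: (i) there are no uncommitted blocks, and (ii) there are no blocks with interference, both under the hypothesis that no more swaps are enabled.

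For part (i), I would simply invoke \autoref{lemma:no-uncommitted-blocks}, which already establishes exactly this. Beyond discharging part of the claim, this step is also a prerequisite for part (ii): once we know every block is committed, every block possesses a well-defined commit transition, and hence the before-commit / after-commit dichotomy for interfering transitions becomes meaningful.

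For part (ii), I would first observe that any interference in a committed block $B$ is either a before-commit interference or an after-commit interference. Indeed, an interfering transition $t_X$, coming from a block different from $B$, by definition executes strictly between two transitions of $B$; since the commit transition of $B$ is itself one of the transitions of $B$ (and therefore cannot be $t_X$), the transition $t_X$ occurs either strictly before or strictly after the commit transition of $B$ in the execution order $<_E$. \autoref{lemma:no-before-commit} rules out the former case and \autoref{lemma:no-after-commit} rules out the latter. Hence no committed block has interference, and combining this with part (i)---that all blocks are committed---we conclude that no block at all has interference.

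The main obstacle is not in this short assembly but in the component lemmas already proved: in particular the termination of the swapping procedure (\autoref{lemma:seq-failure-preservation}), which must cope with restarts whenever a swap introduces a new failure and thereby changes the partition into committed and uncommitted blocks, together with the case analyses of \autoref{lemma:no-uncommitted-blocks}, \autoref{lemma:no-before-commit}, and \autoref{lemma:no-after-commit}, which use the fixed orders $<_E$ and $<_O$ to push a ``last'', ``rightmost'', or ``leftmost'' offending block into a contradiction. Given those, the statement here follows immediately, and it is precisely the invariant needed to conclude that each $\seqreduce{\stmt}$ block can be scheduled to run without interleaving, which feeds into the atomic-section rewriting for the $\interprog \refines \reducedprog$ direction of \autoref{thm:maintheorem}.
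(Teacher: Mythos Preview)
Your proposal is correct and matches the paper's own proof essentially line for line: the paper likewise invokes \autoref{lemma:no-uncommitted-blocks} for the first claim, then \autoref{lemma:no-before-commit} and \autoref{lemma:no-after-commit} to rule out interference in committed blocks, and combines these to conclude all blocks are interference-free. Your added remark that the before/after-commit dichotomy is exhaustive is a nice clarification that the paper leaves implicit.
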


\begin{proof}
When no more swaps are enabled, from lemma~$\ref{lemma:no-uncommitted-blocks}$ we get that that there are no uncommitted blocks. 
Using lemma~$\ref{lemma:no-before-commit}$ and lemma~$\ref{lemma:no-after-commit}$ we don't have any committed blocks with interference. 
Since all blocks in the execution are committed (from \autoref{lemma:no-uncommitted-blocks}) we can conclude that all blocks are interference-free.
\hfill $\square$
\end{proof}

\paragraph*{\textbf{Satisfying completion requirement}}
Complete blocks are blocks corresponding to an instance of $\seqreduce{\stmt}$ in which either all transitions belonging to $\stmt$ have executed, or a transition belonging to $\stmt$ has failed. Otherwise, the block is called incomplete. 
We need to complete the incomplete blocks. Notice that any incomplete block requires only left-movers to complete it. 
If it required right-movers or a non-mover, it would have no commit transition—making the block uncommitted—and we have already eliminated all uncommitted blocks from the execution.
We pick the leftmost incomplete block and complete it the same way as we did for parallel reductions where we satisfy completion requirement for left-reduce case of $\parreduceempty$.
Once we complete a block, we still have an execution which is interference free and there are no uncommitted blocks, so we continue to complete the next leftmost incomplete block. 

\vspace{-2mm}
\paragraph*{\bf{Showing refinement property~\condition{P2}}} Given an execution $\pi$ of $\interprog$ from an initial configuration $(\globalStore, \ThreadPool)$ that ends in a final configuration $(\globalStore', \emptyset)$, we aim to show that there exists an execution $\pi'$ in $\reducedprog$ that starts from $(\globalStore, \ThreadPool)$ and ends in the same final configuration or fails.
\begin{align*}
    (\globalStore, \ThreadPool) \stepp[\prog]\transitive (\globalStore', \emptyset) \quad \implies \quad (\globalStore, \ThreadPool) \stepp[\prog']\transitive  (\globalStore', \emptyset) \quad \lor \quad (\globalStore, \ThreadPool) \stepp[\prog']\transitive  \fail
\end{align*}

We can establish this by applying the same swapping algorithm on $\pi$. If we don't introduce any failure, then we are guaranteed to preserve the end state $(\globalStore', \emptyset)$, as it follows from the $\commutes$ property of left and right movers. Otherwise, we end in a failure. 


\end{document}